\theoremstyle{plain}
\newtheorem{theorem}{Theorem}[section]
\newtheorem{proposition}[theorem]{Proposition}
\newtheorem{lemma}[theorem]{Lemma}
\theoremstyle{definition}
\theoremstyle{remark}
\newtheorem{remark}[theorem]{Remark}
\icmltitlerunning{DeepHoloBrain: From Geometric Deep Model to Computational Neuroscience}
\begin{document}

\twocolumn[
\icmltitle{Exploring the Enigma of Neural Dynamics Through A Scattering-Transform Mixer Landscape for Riemannian Manifold}

% It is OKAY to include author information, even for blind
% submissions: the style file will automatically remove it for you
% unless you've provided the [accepted] option to the icml2024
% package.

% List of affiliations: The first argument should be a (short)
% identifier you will use later to specify author affiliations
% Academic affiliations should list Department, University, City, Region, Country
% Industry affiliations should list Company, City, Region, Country

% You can specify symbols, otherwise they are numbered in order.
% Ideally, you should not use this facility. Affiliations will be numbered
% in order of appearance and this is the preferred way.
\icmlsetsymbol{equal}{*}

\begin{icmlauthorlist}
\icmlauthor{Tingting Dan}{yyy}
\icmlauthor{Ziquan Wei}{yyy,comp}
\icmlauthor{Won Hwa Kim}{sch}
\icmlauthor{Guorong Wu}{yyy,comp,kkk,zzz,zz}
\end{icmlauthorlist}

\icmlaffiliation{yyy}{Department of Psychiatry, University of North Carolina at Chapel Hill, Chapel Hill, NC, 27599, USA.}
\icmlaffiliation{comp}{Department of Computer Science, University of North Carolina at Chapel Hill, Chapel Hill, NC, 27599, USA.}
\icmlaffiliation{sch}{Computer Science and Engineering / Graduate School of AI, POSTECH, Pohang, 37673, South Korea.}
\icmlaffiliation{zzz}{UNC NeuroScience Center, University of North Carolina at Chapel Hill, Chapel Hill, NC, 27599, USA.}
\icmlaffiliation{kkk}{Department of Statistics and Operations Research (STOR), University of North Carolina at Chapel Hill, Chapel Hill, NC, 27599, USA.}
\icmlaffiliation{zz}{Carolina Institute for Developmental Disabilities, University of North Carolina at Chapel Hill, Chapel Hill, NC, 27599, USA}

\icmlcorrespondingauthor{Guorong Wu}{grwu@med.unc.edu}

% You may provide any keywords that you
% find helpful for describing your paper; these are used to populate
% the "keywords" metadata in the PDF but will not be shown in the document
\icmlkeywords{Machine Learning, ICML}

\vskip 0.3in
]

\printAffiliationsAndNotice{}  % leave blank if no need to mention equal contribution
% \printAffiliationsAndNotice{\icmlEqualContribution} % otherwise use the standard text.

\begin{abstract}
The human brain is a complex inter-wired system that emerges spontaneous functional fluctuations. In spite of tremendous success in the experimental neuroscience field, a system-level understanding of how brain anatomy supports various neural activities remains elusive. Capitalizing on the unprecedented amount of neuroimaging data, we present a physics-informed deep model to uncover the coupling mechanism between brain structure and function through the lens of data geometry that is rooted in the widespread wiring topology of connections between distant brain regions.
Since deciphering the puzzle of self-organized patterns in functional fluctuations is the gateway to understanding the emergence of cognition and behavior, we devise a geometric deep model to uncover manifold mapping functions that characterize the intrinsic feature representations of evolving functional fluctuations on the Riemannian manifold. In lieu of learning unconstrained mapping functions, we introduce a set of graph-harmonic scattering transforms to impose the brain-wide geometry on top of manifold mapping functions, which allows us to cast the manifold-based deep learning into a reminiscent of \textit{MLP-Mixer} architecture (in computer vision) for Riemannian manifold. As a proof-of-concept approach, we explore a neural-manifold perspective to understand the relationship between (static) brain structure and (dynamic) function, challenging the prevailing notion in cognitive neuroscience by proposing that neural activities are essentially excited by brain-wide oscillation waves living on the geometry of human connectomes, instead of being confined to focal areas.
\end{abstract}

\section{Introduction}
\label{intro}
Human brain is a complex system physically wired by massive bundles of nerve fibers \cite{bassett2017network}. On top of intertwined structural connectomes, ubiquitous neural oscillations emerge remarkable functional fluctuations, synchronizing across large-scale neural circuits, that support myriad high-level cognitive functions necessary for everyday living \cite{bressler2010large}. With the prevalence of structural and functional neuroimaging technology (aka. Magnetic Resonance Imaging, MRI) in many neuroscience studies, the idea of understanding the mind forms an important concept that the dynamic nature of human brain cannot be understood by thinking of the system as comprised of independent components \cite{terras2012harmonic}. In this regard, there is a critical need to establish a system-level understanding of how the brain function emerges from anatomical structures and how the self-organized system behavior of functional fluctuations supports the cognitive states.

Like many dynamic systems in the universe, the evolving functional fluctuations manifest remarkable geometric patterns to the extent of self-organized spontaneous co-activation of neural activities. As shown in Fig. \ref{fig:overview} (left), functional connectivity (FC), formed by the pairwise correlation between two time courses of BOLD (blood-oxygen-level-dependent) signals \cite{bullmore2009complex}, exhibits \textit{small world} properties \cite{watts1998collective}. From a data science perspective, the FC matrix lives on the high-dimensional Riemannian manifold of SPD (symmetric and positive-definite) matrices. Due to many well-studied mathematical properties of SPD matrix, tremendous efforts have been made to statistical modeling \cite{you2021re}, clustering \cite{qiu2015manifold}, and characterization of temporal dynamics \cite{dan2022learning} from the manifold instances of SPD matrices.

Although manifold-based deep learning often presents greater challenges compared to the counterpart machine learning backbones operating in Euclidean space, prioritizing the preservation of data geometry substantially improves both the reliability of model explanations and overall accuracy \cite{tiwari2022effects}. In this regard, various manifold-based deep models have been proposed for learning feature representation of SPD matrices using Riemannian manifold algebras \cite{huang2017riemannian, dan2022uncovering}. As shown in Fig. \ref{fig:overview}, the driving factor of most deep models for SPD matrices is to find a cascade of mapping functions that progressively project the input SPD matrices into a latent subspace of Riemannian manifold which is in line with the downstream learning tasks. Since there is no constraint on the mapping functions, every element in the mapping function is \textit{independently} updated through the gradients in back-propagation. Thus, it is evident that (1) the learned mapping functions lack explainability, (2) such deep model may demand a relatively large volume of training data, and (3) adapting a pre-trained model to a new dataset may pose challenges.
\begin{figure}[h]
    \centering
    \vspace{-0.5em}
    \includegraphics[width=0.48\textwidth]{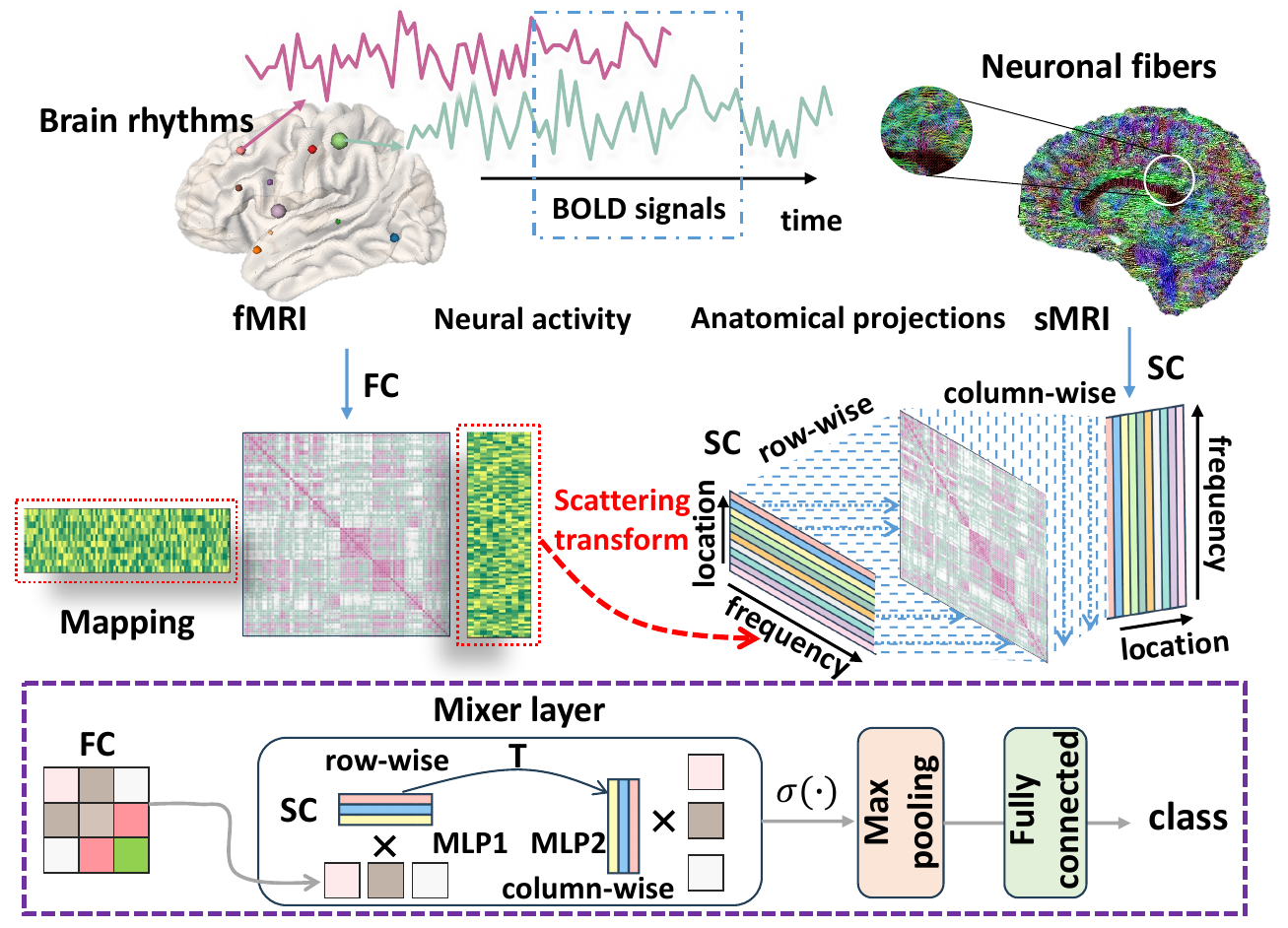}
    \vspace{-1.7em}
    \caption{Overview of answering open neuroscience questions using machine learning techniques. \textit{Top}: The motivation of our work is to understand how brain structure supports ubiquitous functional fluctuations. \textit{Middle}: Compared to conventional mapping functions (left) in manifold-based deep learning, we introduce scattering transform to form the basis of mapping functions on the Riemannian manifold (right), which allows us to constrain the mapping of functional connectivities underlying the geometric of structural region-to-region connectome. \textit{Bottom}: We devise a Mixer architecture on top of the row-wise and column-wise scattering transform which not only yields a novel geometric deep model for Riemannian manifold but also provides a new neuroscience insight to understand the coupling mechanism between brain structure and function.}
    \vspace{-1.5em}
    \label{fig:overview}
\end{figure}

To address these limitations, we introduce the notion of scattering transform to impose a geometric constraint on top of the mapping functions in the Riemannian manifold. Following the notion of recent pioneering work on brain structural and functional coupling \cite{pang2023geometric}, each scattering transform operator is derived from the graph spectrum of structural connections, essentially operating as harmonic wavelets at various oscillation frequencies \cite{hammond2018spectral}. To link manifold-based deep learning with scattering transforms, we present a novel mapping mechanism on the Riemannian manifold that allows us to preserve the data geometry in machine learning. Furthermore, we dissect the mapping process on Riemannian manifold into a row-wise and another column-wise scattering transforms (Fig. \ref{fig:overview}, middle-right) on each SPD matrix, which sets the stage for the learning scenario of Mixer architecture, a reminiscent of \textit{MLP-Mixer} for computer vision \cite{tolstikhin2021mlp}. Due to the simple network architecture, our scattering-transform Mixer on Riemennian manifold of SPD matrices can be trained efficiently compared to existing manifold-based deep models with a significantly reduced number of parameters. Meanwhile, graph-harmonic scattering transforms offer a new window to interpret the brain-wide contribution of mapping functions on the coupling mechanism between brain structure and function.

Our work makes a significant contribution to the forefront of computational neuroscience, enriching traditional experimental neuroscience through the novel insight into brain structure-function coupling mechanism from the deep model. Specifically, we put the spotlight on the physics principle of scattering-transform in the Mixer, which acts as a stepping stone to understanding the mechanistic role of brain structure on functional fluctuations. We use the geometric pattern of graph harmonics to show that self-organized neural dynamics can be parsimoniously understood as resulting from the wave-to-wave interference formed by superimposing the fundamental resonance modes of the brain's geometry (i.e., topology of region-to-region connectivities) on the subject-specific neural activities.

Furthermore, we connect the interference phenomenon discovered in our deep model to the holography technique \cite{holograph_gabor} in computer vision, a stereo-imaging technique that generates a hologram by superimposing a reference beam on the wavefront of interest. Following this notion, we present a proof-of-concept theory, called \textit{DeepHoloBrain}, which computationally “records” the cross-frequency couplings (CFC) of time-evolving interference patterns where spontaneous functional fluctuations are constrained by the graph harmonic waves (with predefined oscillation frequencies).

In practice, we evaluate the clinical value of our manifold mixer model in (1) disease early diagnosis and (2) generality of adapting the pre-trained model to a new dataset. Compared with current state-of-the-art deep models, our method not only achieves the best performance in terms of accuracy and consistency but also shows potential in addressing real-world challenges in clinical practice, particularly in situations where disease data cohorts often have very limited sample sizes.

Together, the contribution of our work has four folds: (1) a new mapping mechanism on the Riemannian manifold constrained by scattering transforms, (2) a novel geometric deep model of scattering-transform Mixer for Riemannian manifold of SPD matrices, (3) a new gateway to understanding the human brain using deep learning technique, and (4) a comprehensive evaluation of the clinical impact on large-scale neuroimages.

\section{Background and Related Works}
\subsection{Open Questions in Network Neuroscience}
In the realm of neuroscience, a plethora of studies investigate connections between $N$ predefined regions in the brain \cite{bassett2017network}. Through diffusion-weighted imaging (DWI), we are able to measure the structural connections (SC) between two brain regions $\Omega_i$ and $\Omega_j$ $(i,j=1,...,N)$, \textit{in vivo}, by tracking bundles of nerve fibers. On the other hand, functional magnetic resonance imaging (fMRI) technique enables to capture spontaneous functional connections (FC) which indicate the synchronization between two time courses of neural activities at different brain regions $\Omega_i$ and $\Omega_j$. In general, we can form a network of SC or FC, where each element denotes the region-to-region SC/FC connection. A brief summary of an analytic pipeline for constructing SC and FC networks is shown in Appendix \ref{process}.
In contrast to the diverse FC patterns associated with cognition and behavior, SC remains relatively static. In this context, several interesting scientific questions arise: \textit{How does the structural foundation of the brain shape its dynamic functional activities? And, how does the coupling between SC and FC contribute to the emergence of cognition and behavior?}

Since each FC matrix is symmetry and positive definite (SPD), striking efforts have been made to understand the self-organized pattern of FC through the lens of Riemannian manifold of SPD matrices \cite{you2021re,dan2022uncovering}. Following this spirit, we seek to explore the SC-FC coupling mechanism using geometric deep models on the Riemannian manifold, where the mathematical insight might open a new window to answer neuroscience questions through machine learning.

\subsection{Canonical Deep Model on SPD Matrices} %(\textit{SPDNet})}
\label{spdnet}
Suppose we have a $N\times N$ SPD matrix $\mathbf{X}$ residing on Riemannian manifold, i.e., $\mathbf{X} \in Sym^+_d (d=N)$, current deep models such as SPDNet \cite{huang2017riemannian} are trained to find a set of non-linear mapping functions to obtain a new feature representation $\mathbf{X}_{l}$ at $l^{th}$ layer by:
\begin{equation}
\vspace{-0.5em}
    \mathbf{X}_l=\mathbf{\Psi}_l\mathbf{X}_{l-1}\mathbf{\Psi}_l^\intercal ,
    \label{eq:SPDNet}
\end{equation}
where the learnable mapping function $\mathbf{\Psi}_l \in \mathbb{R}^{d_l \times d_{l-1}}$ is a row full-rank matrix (usually $d_l \ll d_{l-1}$). $\mathbf{\Psi}_l$ is called positive mapping in Riemannian manifold of SPD matrices \cite{bhatia2009positive}, which ensure the output $\mathbf{X}_l$ maintains the geometric property of SPD matrix, i.e, $\mathbf{X}_l \in Sym^+_{d_{l}}$. Although it is a common practice to impose column-wise orthogonality on $\mathbf{\Psi}_l$, the optimization of each element in $\mathbf{\Psi}_l$ is completely independent in back-propagation. As shown in the middle-left of Fig. \ref{fig:overview}, the learned mapping function $\mathbf{\Psi}_l$ lacks interpretability in the context of data geometry. To address this limitation, we introduce the scattering transform technique to decompose each $\mathbf{\Psi}_l$ into two dimensions: frequency and location (the middle-right of Fig. \ref{fig:overview}).

\subsection{Scattering Transforms on Graph} The scattering transform (such as wavelets) \cite{mollai2010recursive, mallat2012group,gama2019stability} offers a stable and multi-scale representation for signals in the Euclidean space $\mathbb{R}^d$. A collection of pre-defined scattering transforms can decompose any signals into scale and orientation components separately, which allows us to capitalize on geometric intuition while maintaining useful properties for machine learning such as translational invariance. Following this spirit, the scattering transform technique has been extended to geometric data living on the irregular domains (such as graph and manifold) \cite{perlmutter2018geometric,perlmutter2020geometric,gao2019geometric}.

Consider a smooth, compact, and connected $N$-dimensional Riemannian manifold $\mathcal{M}$ and a square-integrable function $f\in \mathbf{L}^2(\mathcal{M})$. The negative Laplace-Beltrami operator $-\Delta$ on $\mathcal{M}$ possesses a countable set of eigenvalues $0 = \lambda_0 < \lambda_1 \leq \lambda_2 \leq \ldots \leq \lambda_{K-1}$. There exists a corresponding sequence of eigenfunctions ${u}_k (k=0,...,K-1)$ satisfying $-\Delta {u}_k = \lambda_k {u}_k$, where $\{{u}_k\}_{k \geq 0}$ forms an orthonormal basis for the manifold $\mathcal{M}$. Following the notion in \cite{perlmutter2020geometric,perlmutter2018geometric}, the eigenfunctions $\{{u}_k\}_{k \geq 0}$ serve as generalized Fourier modes on the manifold $\mathcal{M}$ that is spanned by the inner product with these eigenfunctions. In this context, any function $f$ can be represented as a sum over these eigenfunctions as $f(x) = \sum_{k \geq 0} \widehat{f}(k)u_k(x)$,
where $\widehat{f}(k)=\left\langle f, {u}_k\right\rangle_{\mathbf{L}^2(\mathcal{M})}=\int_{\mathcal{M}} f(y)\overline{{u}_k(y)} dy$ is known as Fourier coefficient corresponding to the eigenfunction $u_k$.

\begin{figure*}[h]
    \centering
    \vspace{-0.5em}
    \begin{minipage}{0.7\textwidth}
        \centering
        \includegraphics[width=0.88\linewidth]{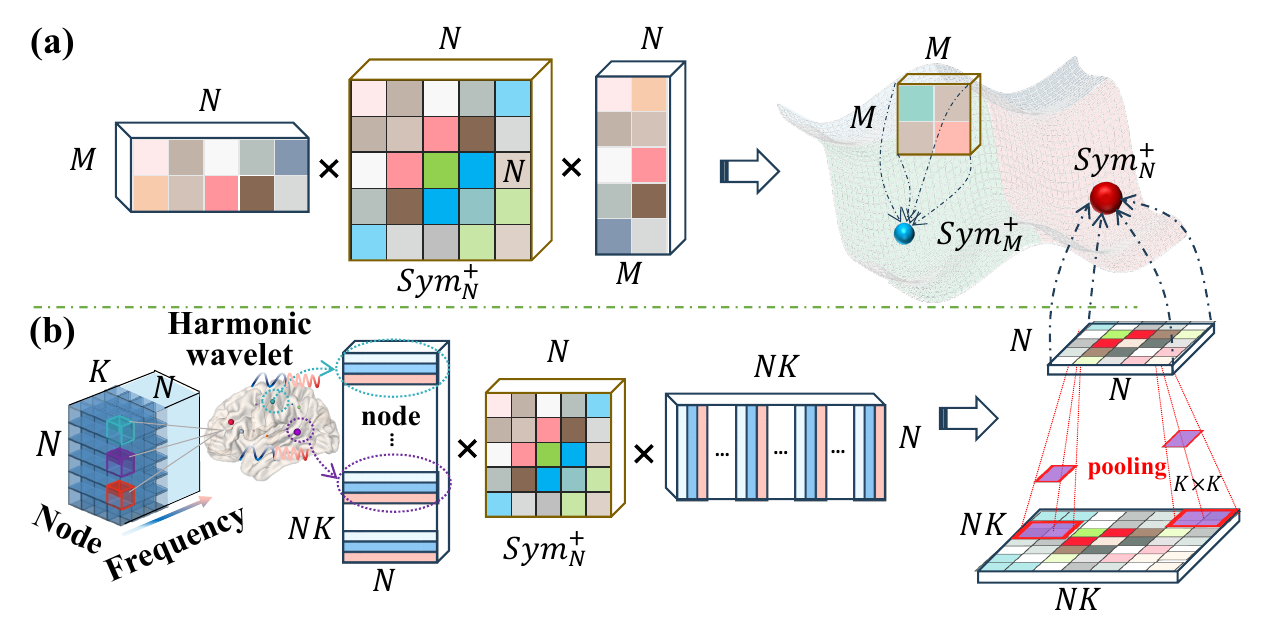}
    \end{minipage}%
    \begin{minipage}{0.3\textwidth}
        \centering
        \vspace{-1.7em}
        \caption{(a) Conventional mapping function (Eq. \ref{eq:SPDNet}) on Riemannian manifold of SPD matrices, where each element in the mapping matrix is estimated independently. (b) Our new mapping function which consists of (1) constructing the block matrix $\mathbb{P}$ by stacking harmonic wavelets over frequency and location, (2) yielding a \textit{Supra-FC} matrix $\mathbb{X}=\mathbb{P}^\intercal \boldsymbol{X}\mathbb{P}$ by Eq. \ref{eq:SPDNet}, and (3) applying pooling operation on $\mathbb{X}$ across frequencies at each location. By doing so, the output is a $N\times N$ SPD matrix.}
        \label{fig2}
    \end{minipage}
     \vspace{-1.8em}
\end{figure*}

Furthermore, we can expand Fourier's bases to the wavelet scattering transform on manifold $\mathcal{M}$ by localizing each eigenfunction $u_k$ to particular region $\Omega_i (i=1,...,N)$, yielding a frequency-specific and location-specific harmonic wavelet function $\psi_i^{k}$. In this work, we construct harmonic wavelets on top of the graph topology \cite{hammond2011wavelets,shuman2013emerging}. Suppose the underlying graph has $N$ nodes and the node-to-node connection is encoded in a $N\times N$ adjacency matrix $\mathbf{A}$. Thus, the graph Laplacian matrix $\mathcal{L}$ can be constructed by $\mathbf{\mathcal{L}}=\mathbf{D}-\mathbf{A}$, where $\mathbf{D}\in \mathbb{R}^{N\times N}$ is a diagonal matrix of node-wise total connectivity degree. Next, we apply eigen-decomposition on $\mathbf{\mathcal{L}}$, yielding $\mathbf{\mathcal{L}}=\boldsymbol{U\Lambda U}^\intercal$, where $\boldsymbol{\Lambda}=diag[\lambda_k]_{k=1}^N$ is a diagonal matrix of eigenvalues and $\boldsymbol{U}=[u_k]_{k=1}^N$ are eigenvectors, each corresponding to the eigenvalue $\lambda_k$. For each eigenvector $u_k$, we can derive in total $N$ harmonic wavelets by localizing $u_k$ to the underlying region $\Omega_i$ by \cite{shuman2020localized}:
\begin{equation}
    %\psi_i^k = [\psi_i^k(j)]_{j=1}^N,    \psi_i^k(j)=g(\gamma \lambda_k)u_k(i)u_k(j),
    \psi_i^k=\boldsymbol{U}g_k(\gamma\boldsymbol{\Lambda})\boldsymbol{U}^\intercal\delta_i
    \label{eq.wavelets}
\end{equation}
where $\delta_i$ is a graph signal with a value of one at region $\Omega_i$ and zero elsewhere. $g_k(\cdot)=e^{-\gamma \lambda_k}$ is a spectral filter that defines the spectral pattern that is localized to each region $\Omega_i$, where $\gamma$ is a learnable hyper-parameter that controls the scale of spectral pattern.

\section{Methods}
%\subsection{Rethink \textit{SPDNet} with Geometric Insight on Transformation Matrix}
\subsection{New Mapping Function on Riemannian Manifold Constrained by Graph Scattering Transforms}
Suppose both SC and FC networks consist of $N$ nodes. We construct $K (K\leq N)$ harmonic wavelets at each node $\Omega_i$ of SC network, denoted by $\mathbf{\Psi}_i = [\psi_i^k]^K_{k=1}$, where $k$ indicates harmonic frequency. For convenience, we use $\mathbb{P}=[[\boldsymbol{\Psi}_1], [\boldsymbol{\Psi}_2],...,[\boldsymbol{\Psi}_N]]\in \mathbb{R}^{N\times NK}$ to denote the block matrix by stacking $N$ region-specific matrices of harmonic wavelets $\boldsymbol{\Psi}_i$ one after another, as shown in Fig. \ref{fig2} (b). Next, we present a new mapping function, i.e., applying scattering transforms $\mathbb{P}$ on FC matrix $\boldsymbol{X}$, yielding a new SPD matrix on the Riemannian manifold $\mathcal{M}$.

\begin{remark}
\label{SPD}
A real and symmetric matrix $\boldsymbol{X} \in Sym^+_N $ is said to be SPD matrix if $v\boldsymbol{X}v^\intercal = \sum_{i,j=1}^N v_ix_{ij}v_j > 0$ for any non-zero vector in $v \in \mathbb{R}^N$.
\end{remark}

\begin{lemma}
    Given a full-rank matrix $ \boldsymbol{W} \in \mathbb{R}^{N \times M} (M<N)$ and a vector $ v \in \mathbb{R}^M $, since the columns of $ \boldsymbol{W} $ are linearly independent and span $ \mathbb{R}^M $, there exists, by the fundamental theorem of linear algebra, a vector $ u \in \mathbb{R}^N $ such that $ \boldsymbol{W} u = v $ (Proof shown in Proposition \ref{theorem:lowrank2}).
    \label{theorem:lowrank}
\end{lemma}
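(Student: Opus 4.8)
The plan is first to settle the correct reading of the claim, since the stated dimensions $u \in \mathbb{R}^N$ and $v \in \mathbb{R}^M$ with $M < N$ only make the equation both well-posed and true in the form $\boldsymbol{W}^\intercal u = v$; equivalently, I must show the linear map $u \mapsto \boldsymbol{W}^\intercal u$ from $\mathbb{R}^N$ onto $\mathbb{R}^M$ is surjective. I would establish this from the rank hypothesis alone, invoking the fundamental theorem of linear algebra, namely the orthogonal splittings $\mathbb{R}^N = \text{Col}(\boldsymbol{W}) \oplus \text{Ker}(\boldsymbol{W}^\intercal)$ and $\mathbb{R}^M = \text{Row}(\boldsymbol{W}) \oplus \text{Ker}(\boldsymbol{W})$ together with $\dim \text{Col}(\boldsymbol{W}) = \dim \text{Row}(\boldsymbol{W}) = \text{rank}(\boldsymbol{W})$, rather than any structure special to the harmonic wavelets that make up $\boldsymbol{W}$.

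The main argument proceeds in three one-line steps. First I would record that ``$\boldsymbol{W}$ full rank with $M < N$'' means full column rank, so $\text{rank}(\boldsymbol{W}) = M$ and $\text{Ker}(\boldsymbol{W}) = \{0\}$, which is precisely the linear independence of the columns. Second, since the row space of $\boldsymbol{W}$ is a subspace of $\mathbb{R}^M$ whose dimension equals $\text{rank}(\boldsymbol{W}) = M$, it must coincide with all of $\mathbb{R}^M$. Third, identifying the row space with the image of the transpose, $\text{Row}(\boldsymbol{W}) = \text{Col}(\boldsymbol{W}^\intercal) = \{\boldsymbol{W}^\intercal u : u \in \mathbb{R}^N\}$, I conclude that every $v \in \mathbb{R}^M$ lies in the image of $\boldsymbol{W}^\intercal$, which is exactly the existence of a solution $u$ to $\boldsymbol{W}^\intercal u = v$.

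As a second, constructive route that also exhibits a witness explicitly, I would exploit that full column rank makes the Gram matrix $\boldsymbol{W}^\intercal \boldsymbol{W} \in \mathbb{R}^{M \times M}$ invertible, set $u = \boldsymbol{W}(\boldsymbol{W}^\intercal \boldsymbol{W})^{-1} v$, and verify by direct substitution that $\boldsymbol{W}^\intercal u = (\boldsymbol{W}^\intercal \boldsymbol{W})(\boldsymbol{W}^\intercal \boldsymbol{W})^{-1} v = v$. This is essentially the Moore--Penrose pseudoinverse of $\boldsymbol{W}^\intercal$, and it makes the non-uniqueness transparent: because $M < N$, the left null space $\text{Ker}(\boldsymbol{W}^\intercal)$ has dimension $N - M > 0$, so $u$ is determined only up to an arbitrary element of that kernel.

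The only genuine obstacle here is bookkeeping rather than mathematics: the statement as written mixes the orientations of $\boldsymbol{W}$ (it writes $\boldsymbol{W}u$ while asserting $u \in \mathbb{R}^N$, $v \in \mathbb{R}^M$, and that the columns ``span $\mathbb{R}^M$''), so the first task is to pin down the consistent interpretation, a tall full-column-rank $\boldsymbol{W}$ whose \emph{rows} span $\mathbb{R}^M$, equivalently $\boldsymbol{W}^\intercal$ surjective, before the fundamental-theorem argument applies verbatim. Once the direction of the map is fixed, each step is an immediate consequence of rank-nullity, and no estimate or limiting argument is required.
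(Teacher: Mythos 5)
Your proof is correct, and it is in fact more careful than the paper's own. The paper's proof (Proposition \ref{theorem:lowrank2}) runs the same core argument you give --- full rank implies the relevant span is the whole target space, so $v$ is a linear combination whose coefficients furnish $u$ --- but it carries over, unresolved, exactly the orientation slip you diagnose: it asserts that the column space of $W \in \mathbb{R}^{n \times m}$ is a subspace of $\mathbb{R}^m$ (the columns actually live in $\mathbb{R}^n$), and that the coefficient vector of the $m$ columns lies in $\mathbb{R}^n$ (it has $m$ entries, hence lies in $\mathbb{R}^m$). Read literally, the paper's argument proves the statement for a matrix whose columns are vectors in $\mathbb{R}^m$, i.e.\ for $\boldsymbol{W}^\intercal$ --- which is precisely the consistent reading you pin down first. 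Your disambiguation is also the one the paper needs downstream: in the proof of Proposition \ref{prop:SPD}, the substituted quadratic form $u^\intercal \boldsymbol{W} \boldsymbol{W}^\intercal \boldsymbol{X} \boldsymbol{W} \boldsymbol{W}^\intercal u$ can only arise from $v = \boldsymbol{W}^\intercal u$, not from $v = \boldsymbol{W} u$ (which is not even well-formed for $u \in \mathbb{R}^N$). Beyond fixing the bookkeeping, your row-space/rank-nullity argument and the paper's column-span argument are the same fundamental-theorem idea; what you genuinely add is the explicit witness $u = \boldsymbol{W}(\boldsymbol{W}^\intercal \boldsymbol{W})^{-1} v$, absent from the paper, which certifies surjectivity constructively (the Gram matrix $\boldsymbol{W}^\intercal \boldsymbol{W}$ is invertible by full column rank) and makes the $(N-M)$-dimensional non-uniqueness of $u$ visible. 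In short: same approach in substance, but yours is the version that type-checks.
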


\begin{remark}
\label{def:inj}
Given an SPD matrix $\boldsymbol{X} \in Sym^+_N$ and a full-rank matrix $\boldsymbol{W} \in \mathbb{R}^{N \times M} (M<N)$, the function $f : Sym^+_N \times \mathbb{R}^{N \times M} \rightarrow Sym^+_M$ defined as $f(\boldsymbol{X}, \boldsymbol{W}) = \boldsymbol{W}^\intercal \boldsymbol{X} \boldsymbol{W}$ \cite{huang2017riemannian}.
\end{remark}

\begin{proposition}
    The output of $f(\boldsymbol{X}, \boldsymbol{W})$ is a $N\times N$ SPD matrix.
    \label{prop:SPD}
\end{proposition}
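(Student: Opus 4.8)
The plan is to verify directly the two defining properties of an SPD matrix from Remark \ref{SPD}—symmetry and strict positivity of the associated quadratic form—applied to the product $\boldsymbol{W}^\intercal \boldsymbol{X} \boldsymbol{W}$ (with $\boldsymbol{W} = \mathbb{P}$ in the notation of Fig. \ref{fig2}). Symmetry and positivity are logically independent, so I would treat them separately; the symmetry check is routine bookkeeping, whereas positivity is where the hypotheses on $\boldsymbol{W}$ must actually be used.

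First I would establish symmetry. Taking the transpose of the triple product and using that $\boldsymbol{X} \in Sym^+_N$ is itself symmetric, one has $(\boldsymbol{W}^\intercal \boldsymbol{X} \boldsymbol{W})^\intercal = \boldsymbol{W}^\intercal \boldsymbol{X}^\intercal \boldsymbol{W} = \boldsymbol{W}^\intercal \boldsymbol{X} \boldsymbol{W}$, so the output is symmetric regardless of the rank of $\boldsymbol{W}$. Next, for positive-definiteness I would fix an arbitrary non-zero vector $v$ in the output space and reduce the quadratic form on the product to a quadratic form on $\boldsymbol{X}$ via the substitution $u = \boldsymbol{W} v$, giving $v^\intercal (\boldsymbol{W}^\intercal \boldsymbol{X} \boldsymbol{W}) v = (\boldsymbol{W} v)^\intercal \boldsymbol{X} (\boldsymbol{W} v) = u^\intercal \boldsymbol{X} u$. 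Since $\boldsymbol{X} \in Sym^+_N$, Remark \ref{SPD} guarantees $u^\intercal \boldsymbol{X} u > 0$ \emph{as long as} $u \neq 0$, which would immediately close the argument.

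The crux of the proof—and the only place where the structure of $\boldsymbol{W}$ genuinely enters—is showing that $u = \boldsymbol{W} v \neq 0$ whenever $v \neq 0$. This is exactly where I would invoke Lemma \ref{theorem:lowrank}: because $\boldsymbol{W}$ is full-rank with $M < N$, its columns are linearly independent, so the linear map $v \mapsto \boldsymbol{W} v$ has trivial kernel and therefore $v \neq 0 \Rightarrow \boldsymbol{W} v \neq 0$. I expect this injectivity step to be the main obstacle, since everything else is mechanical; the entire argument hinges on translating the phrase ``full column rank'' into the operational fact ``no non-zero vector is annihilated by $\boldsymbol{W}$.'' Chaining the pieces together—symmetry, the substitution $u = \boldsymbol{W} v$, injectivity of $\boldsymbol{W}$, and the positive-definiteness of $\boldsymbol{X}$—then yields $v^\intercal (\boldsymbol{W}^\intercal \boldsymbol{X} \boldsymbol{W}) v > 0$ for every non-zero $v$, completing the proof. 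I would also note for consistency that, dimensionally, $\boldsymbol{W}^\intercal \boldsymbol{X} \boldsymbol{W}$ is $M \times M$ in line with the signature $f : Sym^+_N \times \mathbb{R}^{N \times M} \to Sym^+_M$ of Remark \ref{def:inj}, the stated $N \times N$ size being recovered in the intended application after the pooling step of Fig. \ref{fig2}.
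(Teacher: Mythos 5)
Your proof is correct, but it takes a genuinely different---and in fact sounder---route than the paper's own argument. The paper works in the opposite direction: given a nonzero test vector $v$, it invokes Lemma \ref{theorem:lowrank} to produce a preimage $u$ with (in effect) $\boldsymbol{W}^\intercal u = v$, substitutes to get $u^\intercal \boldsymbol{W}\boldsymbol{W}^\intercal \boldsymbol{X}\boldsymbol{W}\boldsymbol{W}^\intercal u$, and then concludes positivity by asserting that $\boldsymbol{W}\boldsymbol{W}^\intercal$ is positive definite and that a product of positive definite matrices is positive definite. Both auxiliary claims are false as stated: for $M<N$ the $N\times N$ matrix $\boldsymbol{W}\boldsymbol{W}^\intercal$ has rank $M$ and is only positive \emph{semi}-definite, and a product of two positive definite matrices need not even be symmetric; the paper's conclusion survives only because $\boldsymbol{W}\boldsymbol{W}^\intercal u = \boldsymbol{W}v \neq 0$, i.e., it secretly needs exactly the injectivity fact you isolate. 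Your route---evaluating $v^\intercal(\boldsymbol{W}^\intercal\boldsymbol{X}\boldsymbol{W})v = (\boldsymbol{W}v)^\intercal\boldsymbol{X}(\boldsymbol{W}v)$ and reducing everything to ``full column rank $\Rightarrow$ trivial kernel $\Rightarrow$ $\boldsymbol{W}v\neq 0$''---avoids both pitfalls, requires no existence lemma at all, and additionally records the symmetry check that the paper omits. Two minor remarks: you cite Lemma \ref{theorem:lowrank} for injectivity, but that lemma (modulo its own dimension mismatch) asserts existence of preimages, i.e., a surjectivity-type statement; the fact you actually use---linear independence of the columns of $\boldsymbol{W}$---is correct and elementary, so it is cleaner to state it directly, as you do in words, rather than attribute it to the lemma. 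Finally, your observation that the output is $M\times M$ rather than the claimed $N\times N$ is a genuine inconsistency between the proposition and Remark \ref{def:inj}, correctly resolved by noting that the $N\times N$ size reappears only after the pooling step of Proposition \ref{SPD2}.
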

\vspace{-1.5em}
\begin{proof}
According to Lemma \ref{theorem:lowrank}, there exists a vector $u \in \mathbb{R}^N$ such that $\boldsymbol{W} u = v$. Note that $u$ is not the zero vector because $v$ is not zero and $\boldsymbol{W}$ has full rank.
Substitute $v = \boldsymbol{W} u$ into the quadratic form: $v^\intercal (\boldsymbol{W}^\intercal \boldsymbol{X} \boldsymbol{W}) v = (\boldsymbol{W} u)^\intercal (\boldsymbol{W}^\intercal \boldsymbol{X} \boldsymbol{W}) (\boldsymbol{W} u) = u^\intercal \boldsymbol{W} \boldsymbol{W}^\intercal \boldsymbol{X} \boldsymbol{W} \boldsymbol{W}^\intercal u$.
Since $\boldsymbol{W}$ has full rank and $M < N$, $\boldsymbol{W} \boldsymbol{W}^\intercal$ is positive definite. As the product of positive definite matrices $\boldsymbol{W}\boldsymbol{W}^\intercal$, $\boldsymbol{X}$, and $\boldsymbol{W}\boldsymbol{W}^\intercal$ yields another positive definite matrix, we have $u^\intercal (\boldsymbol{W} \boldsymbol{W}^\intercal \boldsymbol{X} \boldsymbol{W} \boldsymbol{W}^\intercal) u > 0$. Because $u$ is not the zero vector, the quadratic form is strictly positive. This proves that the output of $f(\boldsymbol{X}, \boldsymbol{W}) = \boldsymbol{W}^\intercal \boldsymbol{X} \boldsymbol{W}$ is indeed an SPD matrix (as shown in Fig. \ref{fig2} (a)).
\end{proof}

\begin{proposition}
    If the number of rows in $\boldsymbol{W}$ is less than the number of columns, i.e, $N<M$, the mapping $f(\boldsymbol{X}, \boldsymbol{W}) = \boldsymbol{W}^\intercal \boldsymbol{X} \boldsymbol{W}$ only yields a semi-positive definiteness matrix.
    \label{prop:semi-SPD}
\end{proposition}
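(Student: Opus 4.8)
The plan is to reuse the same quadratic-form substitution as in Proposition \ref{prop:SPD}, but to pinpoint where the strict inequality there collapses into a non-strict one once $\boldsymbol{W}$ becomes a ``fat'' matrix. First I would take an arbitrary $v \in \mathbb{R}^M$ and rewrite $v^\intercal(\boldsymbol{W}^\intercal \boldsymbol{X} \boldsymbol{W})v = (\boldsymbol{W}v)^\intercal \boldsymbol{X}(\boldsymbol{W}v)$. Setting $w = \boldsymbol{W}v \in \mathbb{R}^N$ and invoking the defining property of SPD matrices from Remark \ref{SPD}, I obtain $w^\intercal \boldsymbol{X} w \geq 0$ for every $v$, so $\boldsymbol{W}^\intercal \boldsymbol{X} \boldsymbol{W}$ is symmetric and positive \emph{semi}-definite. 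Crucially, because $\boldsymbol{X}$ is strictly positive definite, this form vanishes exactly when $w = \boldsymbol{W}v = 0$, which is the hook for the second half of the argument.

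The decisive step is to produce a nonzero $v$ that annihilates the form, thereby ruling out strict positive definiteness. Here the hypothesis $N < M$ is essential: viewing $\boldsymbol{W} \in \mathbb{R}^{N \times M}$ as a linear map $\mathbb{R}^M \to \mathbb{R}^N$, its rank is at most $N$, so by the rank--nullity theorem its null space has dimension at least $M - N > 0$. I would therefore pick a nonzero $v_0$ with $\boldsymbol{W}v_0 = 0$; for this vector $v_0^\intercal(\boldsymbol{W}^\intercal \boldsymbol{X} \boldsymbol{W})v_0 = 0$ while $v_0 \neq 0$, so the matrix cannot be positive definite and is at best positive semi-definite, as claimed.

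The contrast with Proposition \ref{prop:SPD} is what makes the statement sharp, and it is precisely the null-space dichotomy: when $M < N$ the matrix $\boldsymbol{W}$ has full column rank and trivial kernel (the content of Lemma \ref{theorem:lowrank}), so $\boldsymbol{W}v = 0$ forces $v = 0$ and the form is strictly positive; when $N < M$ the kernel is necessarily non-trivial and the form must vanish on it. I expect the only genuine obstacle to be dimensional bookkeeping --- making sure the rank--nullity count is applied to $\boldsymbol{W}$ itself (the $N \times M$ map) rather than to $\boldsymbol{W}^\intercal$, and that ``semi-positive definite'' is read as admitting a nonzero null vector rather than merely as $\geq 0$. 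Once the dimensions are tracked correctly, the result is a two-line consequence of the SPD definition together with a kernel-dimension count.
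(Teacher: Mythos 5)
Your proof is correct and follows essentially the same route as the paper's: both reduce the quadratic form to $(\boldsymbol{W}v)^\intercal \boldsymbol{X} (\boldsymbol{W}v)$, observe it is strictly positive whenever $\boldsymbol{W}v \neq 0$ (so the matrix is positive semi-definite), and exhibit a nonzero vector in the kernel of $\boldsymbol{W}$ --- guaranteed by $N < M$ --- on which the form vanishes, ruling out strict definiteness. The only difference is presentational: you justify the existence of that kernel vector explicitly via the rank--nullity theorem, whereas the paper simply asserts it.
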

\vspace{-1.5em}
\begin{proof}
    Similar to the proof for Proposition \ref{prop:SPD}, we examine the quadratic form given by $ v^\intercal (\boldsymbol{W}^\intercal X \boldsymbol{W}) v = (\boldsymbol{W}v)^\intercal X (\boldsymbol{W}v)$ for any vector $v$. Since $N<M$, there exist a vector $v$ such that $\boldsymbol{W}v=0$, resulting in $f=(\boldsymbol{X}, \boldsymbol{W})=0$. Since $\boldsymbol{X}$ is SPD, $ (\boldsymbol{W}v)^\intercal \boldsymbol{X} (\boldsymbol{W}v) > 0 $ as long as $\boldsymbol{W}v$ is non-zero vector. Together, we have proven Proposition \ref{prop:semi-SPD}.
\end{proof}

Recall that $\mathbb{P}\in \mathbb{R}^{N\times NK}$ is a block matrix (primarily indexed by location $\Omega_i$), where each block element is a $N\times K$ matrix. Since $N<NK$ (row number less than column number), $\Tilde{\mathbb{P}}=f(\boldsymbol{X},\mathbb{P})$ does not guarantee that the output matrix resides on the Riemannian manifold of SPD matrices. However, $\mathbb{X}=\mathbb{P}^\intercal \boldsymbol{X}\mathbb{P}\in \mathbb{R}^{NK\times NK}$ essentially expand each (scalar) element in $\boldsymbol{X}$ to a $K\times K$ matrix. In this context, we coin $\mathbb{X}$ as \textit{Supra-FC} matrix, where each element $\mathbb{X}_{ij}$ is a $K\times K$ block matrix, i.e., $\mathbb{X}_{ij}=[\psi_i^s\boldsymbol{X}\psi_j^t]_{s,t=1}^K$. In light of this, we seek to use the following pooling operation to regain the SPD property while reducing the dimensionality.

\begin{proposition}
Given a \textit{Supra-FC} matrix $\mathbb{X}$, we apply max-pooling operation to each block matrix $\mathbb{X}_{ij}$ (over frequencies $s$ and $t$). Thus, the resulting $N\times N$ matrix

$ \Tilde{\boldsymbol{X}} = \begin{bmatrix} \text{max}(\mathbb{X}_{11}) & \text{max}(\mathbb{X}_{12}) & \cdots & \text{max} (\mathbb{X}_{1N}) \\ \text{max}(\mathbb{X}_{21}) & \text{max}(\mathbb{X}_{22}) & \cdots & \text{max}(\mathbb{X}_{2N}) \\ \vdots & \vdots & \ddots & \vdots \\ \text{max}(\mathbb{X}_{N1}) & \text{max}(\mathbb{X}_{N2}) & \cdots & \text{max}(\mathbb{X}_{NN}) \end{bmatrix} $

is positive-definite.

\label{SPD2}
\end{proposition}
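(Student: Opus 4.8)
The plan is to reduce the statement to the already-established congruence argument behind Proposition~\ref{prop:SPD} (a symmetric congruence $\boldsymbol{\Phi}^\intercal \boldsymbol{X}\boldsymbol{\Phi}$ is SPD whenever $\boldsymbol{\Phi}$ is full rank). First I would record the two structural facts that make the claim plausible. Using the Cholesky (or symmetric square-root) factorization $\boldsymbol{X} = \boldsymbol{L}\boldsymbol{L}^\intercal$ with $\boldsymbol{L}$ invertible, every block entry rewrites as an inner product, $(\mathbb{X}_{ij})_{st} = (\psi_i^s)^\intercal \boldsymbol{X}\psi_j^t = \langle \boldsymbol{L}^\intercal\psi_i^s,\, \boldsymbol{L}^\intercal\psi_j^t\rangle$, so each block is a Gram-type cross-correlation of the transformed wavelets. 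This immediately yields symmetry of the pooled matrix, $\Tilde{X}_{ij} = \Tilde{X}_{ji}$ (swap $s\leftrightarrow t$ and use symmetry of $\boldsymbol{X}$), and strict positivity of the diagonal, since $\Tilde{X}_{ii} = \max_{s,t}(\psi_i^s)^\intercal\boldsymbol{X}\psi_i^t \ge (\psi_i^s)^\intercal\boldsymbol{X}\psi_i^s = \|\boldsymbol{L}^\intercal\psi_i^s\|^2 > 0$.

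The core of the argument I would push is a \emph{selection reduction}. The max-pooling over $(s,t)$ picks, for each block $(i,j)$, a maximizing frequency pair; if these maximizers are location-consistent, i.e.\ there is a single frequency $k_i$ per node so that the block maximum is attained at $(k_i,k_j)$, then collecting $\boldsymbol{\Phi} = [\psi_1^{k_1},\dots,\psi_N^{k_N}] \in \mathbb{R}^{N\times N}$ yields exactly $\Tilde{\boldsymbol{X}} = \boldsymbol{\Phi}^\intercal\boldsymbol{X}\boldsymbol{\Phi}$, which is SPD by the congruence argument of Proposition~\ref{prop:SPD} provided $\boldsymbol{\Phi}$ has full rank. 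Full rank is automatic when all $k_i$ coincide, since for a fixed frequency $k$ the map $i\mapsto\psi_i^k = \boldsymbol{U}g_k(\gamma\boldsymbol{\Lambda})\boldsymbol{U}^\intercal\delta_i$ (Eq.~\ref{eq.wavelets}) is an invertible spectral reweighting of the standard basis $\{\delta_i\}$, and it holds generically for mixed selections. In this regime the claim is immediate.

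The hard part, and the step I expect to be the genuine obstacle, is that max-pooling acts \emph{independently} on each block, so the maximizing pair $(s_{ij},t_{ij})$ need not be location-consistent and $\Tilde{\boldsymbol{X}}$ is in general not a single congruence $\boldsymbol{\Phi}^\intercal\boldsymbol{X}\boldsymbol{\Phi}$; an entrywise maximum of inner products is not a positive-definite operation as such. I would therefore fall back to a perturbative estimate rather than an exact reduction: invoke the spatial localization of the harmonic wavelets $\psi_i^k$ (the heat-kernel filter $g_k(\gamma\boldsymbol{\Lambda})=e^{-\gamma\lambda_k}$ concentrates $\psi_i^k$ near $\Omega_i$) to bound the off-diagonal pooled entries $|\Tilde{X}_{ij}|$ for $i\neq j$ against the diagonal ones, and then close by a Gershgorin / strict-diagonal-dominance argument: a real symmetric matrix with positive diagonal satisfying $\Tilde{X}_{ii} > \sum_{j\neq i}|\Tilde{X}_{ij}|$ has all eigenvalues positive, hence is SPD.

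The cleanest honest conclusion is that the statement needs either the location-consistency assumption on the pooling maximizers (giving the slick Proposition~\ref{prop:SPD} reduction) or the localization-plus-diagonal-dominance quantitative control. I would present the diagonal-dominance route as the default, since it does not require the maximizers to agree across blocks and only asks that the scale parameter $\gamma$ keep the wavelets sufficiently concentrated, which is precisely the regime the model operates in.
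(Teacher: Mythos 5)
Your proposal takes a genuinely different route from the paper, but as a proof of Proposition \ref{SPD2} it has a real gap: neither of your two strategies is actually carried out, and you say so yourself. The selection reduction to a single congruence $\boldsymbol{\Phi}^\intercal\boldsymbol{X}\boldsymbol{\Phi}$ fails for precisely the reason you identify (the maximizer $(s_{ij},t_{ij})$ is chosen independently in each block, so in general no one matrix $\boldsymbol{\Phi}$ realizes $\Tilde{\boldsymbol{X}}$), and the diagonal-dominance fallback is only gestured at: you never derive the inequality $\Tilde{X}_{ii} > \sum_{j\neq i}|\Tilde{X}_{ij}|$ from anything the construction guarantees, and it cannot be derived unconditionally, since $\gamma$ is a learnable parameter and nothing forces the wavelets to be concentrated enough for Gershgorin to apply (for a strongly synchronized SPD input, e.g.\ $\boldsymbol{X}$ close to $\mathbf{1}\mathbf{1}^\intercal + \epsilon \boldsymbol{I}$, the pooled off-diagonal entries are comparable to the diagonal ones). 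What you have written is therefore a proof of two \emph{modified} propositions, each with an added hypothesis (maximizer consistency, or a quantitative localization regime), presented in place of a proof of the stated one.

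For comparison, the paper's own proof (a sketch) proceeds along entirely different lines: it substitutes Eq.~\ref{eq.wavelets} into each entry to get $\mathbb{X}_{ij}[s,t]=g(\lambda_s)g(\lambda_t)u_s(i)u_s(j)\boldsymbol{X}u_t(i)u_t(j)$, lower-bounds each block maximum by $1/N>0$ using the fact that the first Laplacian eigenvector is the constant positive vector $u_0=\frac{1}{\sqrt{N}}\mathbf{1}$ (Proposition \ref{laplacian}), and then, exploiting that $g$ is an exponential, asserts $\Tilde{\boldsymbol{X}}=e^{B}$ for some $N\times N$ matrix $B$ and invokes the fact that the exponential of a symmetric matrix is SPD (Proposition \ref{spd_prove1}), with symmetry handled separately by the averaging step (Proposition \ref{spd_prove2}). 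So the paper attempts an unconditional argument built on the spectral structure of the wavelets, not on localization or maximizer consistency. You should note, however, that your central objection---that an entrywise maximum of inner products is not an SPD-preserving operation---is not actually dispelled by the paper's argument either: the exponential structure it extracts is \emph{entrywise} (Hadamard), whereas Proposition \ref{spd_prove1} concerns the \emph{matrix} exponential, and the entrywise exponential of a symmetric matrix need not be positive definite (the entrywise exponential of $\bigl(\begin{smallmatrix}0 & a\\ a & 0\end{smallmatrix}\bigr)$ has determinant $1-e^{2a}<0$ for $a>0$). Your skepticism about the unconditional claim is therefore well placed, but a complete answer would either have to close one of your two routes with explicit hypotheses and bounds, or repair the paper's exponential argument; as submitted, your attempt does neither.
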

\vspace{-1.5em}
\begin{proof}
\label{spd_proof}
We sketch the proof as follows. By substituting the formulation of wavelets into each $\psi_i^s\boldsymbol{X}\psi_j^t$, we have
$\mathbb{X}_{ij}[s,t]=g(\lambda_s)g(\lambda_t)u_s(i)u_s(j)\boldsymbol{X}u_t(i) u_t(j)$
% $\mathbb{X}_{ij}[s,t]=g(\lambda_s)g(\lambda_t)u_s(i)u_t(j)\sum_{p,q=1}^{N}u_s(q)\boldsymbol{X}(p,q) u_t(j)$
(for convenience, we skip the scaling effect $\gamma$ and index value $\delta$ in Eq. \ref{eq.wavelets}). Next, we prove that the lower bound of
$max(u_s(i)u_s(j)\boldsymbol{X}u_t(i) u_t(j))$ is $\frac{1}{{N}}$
% $max(u_s(i)u_t(j)\sum_{p,q=1}^{N}u_s(q)\boldsymbol{X}(p,q) u_t(j))$ is $\frac{1}{\sqrt{N}}$
in Proposition \ref{laplacian} of Appendix. Recall that $g(\cdot)$ is an exponential function, we show that $\Tilde{X}$ is positive-definite in Proposition \ref{spd_prove1} of Appendix since $\Tilde{X}=e^B$ where $B$ is a $N\times N$ matrix.
\vspace{-1.0em}
\end{proof}

\textbf{Remarks.} In Fig. \ref{fig2}(b), we demonstrate the workflow of new mapping function on the Riemannian manifold which includes (1) constructing \textit{Supra-FC} matrix $\mathbb{X}=\mathbb{P}^\intercal\boldsymbol{X}\mathbb{P}$, (2) applying block-wise max-pooling operation $\boldsymbol{X}=\text{max-pooling}(\mathbb{X})$ via Proposition \ref{SPD2}, and (3) make $\Tilde{X}$ symmetric by $\Tilde{X}\leftarrow \frac{1}{2}(\Tilde{X}+\Tilde{X}^\intercal)$ (refer to Proposition \ref{spd_prove2} in Appendix). Recall that there is a learnable parameter $\gamma$ in Eq. \ref{eq.wavelets}. In this regard, we present the following deep model to optimize scattering transforms in $\mathbb{P}$ through the lens of scaling effect on harmonic wavelets.

\subsection{\textit{MLP-Mixer} for Riemannian Manifold}
Since each row in $\mathbb{P}$ is associated with a harmonic wavelet, the operation of $\mathbb{P}^\intercal\boldsymbol{X}\mathbb{P}$ can be boiled down to first apply scattering transform to each column of $\boldsymbol{X}$ followed by same operation on its rows. Inspired by the efficient \textit{MLP-Mixer} architecture in computer vision \cite{tolstikhin2021mlp}, we propose to deploy a set of MLPs to learn the optimal scaling effect $\gamma$ in the column-wise scattering transforms $\mathbb{P}X$ and then another set of MLPs for the optimal scaling effect $\gamma$ in the counterpart row-wise scattering transforms. Suppose $\sigma_{row}$ and $\sigma_{column}$ denote the MLPs (followed by the ReLU layer on manifold \cite{huang2017riemannian}) for row-wise and column-wise scattering transforms, respectively. The \textit{MLP-Mixer} architecture of positive mapping $\mathbb{P}^\intercal X\mathbb{P}$ on Riemannian manifold can be formulated as:
\begin{equation}
% \vspace{-1.5em}
\label{forward}
\left\{
\begin{aligned}
    &\mathbb{X}_{l+1} = \sigma_{row}\{\text{LN}[\sigma_{column}(\text{LN}(\mathbb{P}^\intercal\Tilde{X}_l))\mathbb{P}]\}\\
    &\Tilde{X}_{l+1} = \text{max-pooling}(\mathbb{X}_{l+1})\\
    &\Tilde{X}_{l+1} = \frac{1}{2}(\Tilde{X}_{l+1}+(\Tilde{X}_{l+1})^\intercal)
\end{aligned}
\right.
\end{equation}
where LN($\cdot$) denotes LayerNorm.

\textbf{Remarks.} It is clear that the input to each \textit{MLP-Mixer} is a SPD matrix $\Tilde{X}_l$. The backbone of positive mapping on the Riemannian manifold is nothing but a set of MLPs and block-wise max-polling operations, yielding a new SPD matrix $\Tilde{X}_{l+1}$ in the same dimensions. In our implementation, we employ the classic positive mapping function in Eq. \ref{eq:SPDNet} to learn the low-dimensional feature representations. After that, we project the learned SPD matrix to the tangent space \cite{huang2015log} and then connect to the fully-connected layer for various machine learning applications such as classification.

\subsection{\textit{DeepHoloBrain}: A Proof-of-Concept Approach to Explore the Enigma of Neural Dynamics Through the Insight of Deep Model}
\label{DeepHoloBrain}
\textbf{Scattering Transforms: Stepping Stone between Deep Learning and SC-FC Couplings.} In functional MRI studies, the FC matrix $\boldsymbol{X}$ encodes pairwise correlations between two time course of neural activities $h_i=[h_i(1),h_i(2),...,h_i(T)]$ at region $\Omega_i$ and $h_j(t)$ at region $\Omega_j$, where $T$ is the number of time points. For convenience, we stack each $h_i$ row-by-row and form a whole-brain time course matrix $\boldsymbol{H}\in \mathbb{R}^{N\times T}$ (purple lines in Fig. \ref{snapshot}). In this context, the FC matrix can be computed by $\boldsymbol{X}=\boldsymbol{H}\boldsymbol{H}^\intercal$ (brown box), where each element $x_{ij}=h_ih_j^\intercal$.

\begin{figure}[h]
    \centering
    \vspace{-1.2em}
    \includegraphics[width=0.4\textwidth]{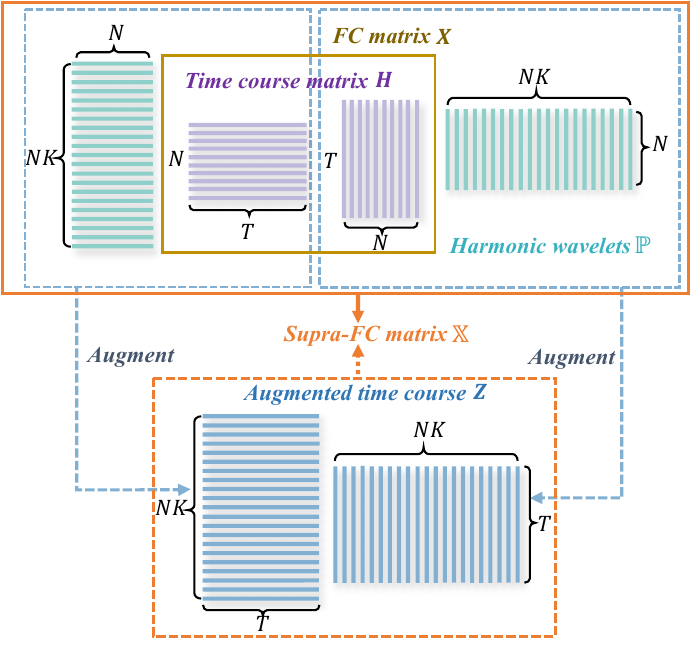}
    \vspace{-1.0em}
    \caption{Two ways to compute \textit{Supra-FC} matrix $\mathbb{X}$. Top: We apply scattering transforms $\mathbb{P}$ on FC matrix $\boldsymbol{X}$ (brown box) and obtain \textit{Supra-FC} matrix by $\mathbb{X}=\mathbb{P}^\intercal\boldsymbol{X}\mathbb{P}$. Bottom: An alternative way is to break the FC matrix into the covariance of time course matrix $\boldsymbol{H}$ (purple lines in the brown box). Then the inner product between each scattering transform and the snapshot of time course (columns in $\boldsymbol{H}$ is essentially the result of modulating the whole-brain signal of neural activity using harmonic wavelets from SC, which is constrained by the geometry of brain anatomy. Eventually, the correlation between augmented neural signals $\boldsymbol{Z}$ (blue lines) results in the same \textit{Supra-FC} matrix $\mathbb{X}$, accompanied by an in-depth neuroscience underpinning for SC-FC coupling mechanisms. }
    % \vspace{-1.5em}
    \label{snapshot}
\end{figure}

In Proposition \ref{SPD2}, we apply harmonic wavelets on FC matrix $\boldsymbol{X}$. Here, we break down FC matrix $\boldsymbol{X}$ into $\boldsymbol{HH}^\intercal$. Furthermore, we study each column in $\boldsymbol{H}$, which is the whole-brain snapshot of neural activities $h^t\in \mathbb{R}^N$ at time $t$. As the blue dashed box in Fig. \ref{snapshot}, we first apply harmonic wavelets $\mathbb{P}$ to each snapshot $h^t$, yielding an augmented time course matrix $\boldsymbol{Z}=\mathbb{P}\boldsymbol{H}\in \mathbb{R}^{NK\times T}$ (blue lines, compared to original signals $\boldsymbol{H}$). After that, the inner project of $\boldsymbol{Z}$ and $\boldsymbol{Z}^\intercal$ results in the \textit{Supra-FC} matrix $\mathbb{X}$ (orange dashed box).

\textbf{Neuroscience Insights.} The oscillation patterns of each harmonic wavelet $\psi_i^k$, constrained by the local topology of the structural connectome, characterize the frequency-specific neural activities supported by the underlying neural circuit. The inner project $\langle\psi_i^k,h^t\rangle$ over time essentially allows us to modulate the observed neural activity signals with the pre-define bandpass filters, which gives rise to coupled neural oscillations at distinct frequencies. In analogy to the holography technique in computer vision, the physics insight behind is that $\mathbb{X}=\boldsymbol{ZZ}^\intercal$ (orange dashed box in Fig. \ref{snapshot}) records interference patterns generated by two SC-modulated neural activity signals.

Since the geometry of harmonic wavelets is governed by the topology from SC, we have initialed the effort of building an integrated framework to analyze the SC-FC coupling mechanism using well-studied interference principles, which yields a new research paradigm using deep learning technology, called \textit{DeepHoloBrain}.

\textbf{Potential Applications.} \textit{First}, our manifold-based deep model can be used to investigate SC-FC coupling mechanisms in response to specific tasks or stimuli. Specifically, the input to our model is time course from task-fMRI images. The driving factor of machine learning is to predict the underlying cognitive tasks based on the learned intrinsic feature representation of functional fluctuations that describe the latent brain states. In addition, we integrate node-specific and frequency-specific attention components on top of the scattering transforms $\mathbb{P}$, which allows us to uncover how task-dependent spontaneous FC is supported by a collection of anatomical neural circuits in SC.

\textit{Second}, our approach offers an integrated approach to predict disease risks using both SC and FC information. Current machine learning methods primarily focus on feature fusion \cite{sarwar2021structure}. We model the SC-FC relationship through scattering transforms with great mathematical insight. Thus, our model offers in-depth network neuroscience underpinning to uncover the synergistic effect of brain structure/function and elucidate their mechanistic role in modifying cognitive performance in the context of neurological diseases.

\section{Experiments}

\subsection{Data Description and Experimental Setting}

\paragraph{Data Description.} To thoroughly validate the effectiveness of our proposed model, we utilize three existing public datasets for our experiments.

\textbf{The Lifespan Human Connectome Project Aging (HCP-A) dataset \cite{bookheimer2019lifespan}. } HCP-A is instrumental in task recognition research, offering a comprehensive view of the aging process. It includes data from 717 subjects, encompassing both fMRI (4,846 time series) and DWI (717) scans. This rich collection facilitates in-depth analyses of both functional and structural connectivity. HCP-A dataset includes data from four brain tasks associated with memory: VISMOTOR, CARIT, FACENAME, and Resting State. In the related experiments, these tasks are treated as distinct categories in a four-class classification problem.
% \footnote{\url{https://humanconnectome.org/study/hcp-lifespan-aging}}}
% \footnote{\url{https://adni.loni.usc.edu/data-samples/access-data/}}

\textbf{Alzheimer’s Disease Neuroimaging Initiative (ADNI) dataset \cite{weiner2015impact}}. The ADNI dataset serves as an invaluable resource, featuring a collection of 250 fMRI time series and 1,012 meticulously processed structural connectomes. In our experiment, we selectively focused on 250 subjects who had undergone both DWI and fMRI scans. This careful selection ensured that we had access to detailed data on both functional connectivity and structural connectomes for each subject. Additionally, the ADNI data includes clinical diagnostic labels, encompassing a spectrum of cognitive states: Cognitive Normal (CN), Subjective Memory Complaints (SMC), Early-Stage Mild Cognitive Impairment (EMCI), Late-Stage Mild Cognitive Impairment (LMCI), and Alzheimer's Disease (AD). Considering the data balance issue, we simplified these categories into two broad groups based on disease severity. Specifically, we combined CN, SMC, and EMCI into a single `CN' group, representing less severe conditions, while LMCI and AD were grouped together as the `AD' group. This categorization is intended to facilitate a binary classification framework for our analysis.

% \footnote{\url{https://www.oasis-brains.org/}}
\textbf{Open Access Series of Imaging Studies (OASIS) dataset \cite{lamontagne2019oasis}}. The OASIS dataset presents a substantial collection of data from 924 subjects, comprising 3,322 fMRI sessions in total. Each subject has BOLD signals and structural connectome data, forming a comprehensive foundation for our analysis. In our experiment, we focused on binary classification: subjects in preclinical stages of Alzheimer's disease (categorized as preclinical-AD) or those manifesting dementia-related conditions (categorized as AD), while healthy individuals were classified under the `CN' group. This classification strategy aids in diagnosing the preclinical conditions at the early stage.

We summarize the processing steps for SC and FC data in the Appendix \ref{process}. In all of the following experiments, we partition each into 90 regions using AAL atlas \cite{tzourio2002automated}. Thus, SC is a $90\times 90$ matrix where each element is quantified by the number of fibers linking two brain regions. We further normalize the fiber count by the total fiber count of the underlying subject. Regarding FC, we first calculate the mean time course of BOLD signals in each region from fMRI. Then FC is a $90\times 90$ matrix by Pearson's correlation between mean time courses in different regions, which indicates temporal synchronization of neuronal activity in the brain (as shown in Fig. \ref{sc-fc}).

\textbf{Experimental Setup.} In our evaluations, we have conducted a thorough evaluation of the proposed \textit{DeepHoloBrain} model, focusing on two key classification challenges in neuroscience: brain task recognition and disease diagnosis. Since the geometric patterns in FC matrix have been widely investigated in the neuroscience field \cite{you2021re}, our comparative analysis includes a range of state-of-the-art SPD matrix learning methods: Covariance Discriminative Learning (CDL) \cite{wang2012covariance}, Log-Euclidean Metric Learning (LEML) \cite{huang2015log}, SPD Manifold Learning (SPDML) \cite{harandi2014manifold}, Affine-Invariant Metric (AIM) \cite{pennec2006riemannian}, Riemannian Sparse Representation (RSR) \cite{harandi2012sparse}, DeepO2P network \cite{ionescu2015matrix}, and SPDNet \cite{huang2017riemannian}. For each method, we utilize the source codes and meticulously fine-tuned parameters according to the specifications provided by their respective authors to ensure the integrity of our evaluation. The effectiveness of these comparative methods has been rigorously gauged using a comprehensive set of metrics, including accuracy, recall, F1-score, and precision. For all classification tasks, we use a cross-entropy loss function that compares the ground truth $G_{m,c}$ for the $c^{th}$ class of the $m^{th}$ sample against the corresponding prediction $\hat G_{m,c}$. Regarding our deep model, a regularization term is added to the scale parameter $\gamma$ in harmonic wavelets (Eq. \ref{eq.wavelets}) to prevent it from taking negative values:
\begin{equation}
% \vspace{-0.5em}
\label{loss}
\ell (\gamma) = -\frac{1}{M} \sum_{m=1}^M \sum_{c \in C} G_{m, c} \log \hat{G}_{m, c}(\gamma) + \beta |\gamma| [\gamma<0]
\end{equation}
Here, $\beta$ is a hyperparameter, $M$ represents the sample size, and $|\gamma|[\gamma<0]$ denotes the absolute value of $\gamma$ when it is less than zero. We update the scale parameter using the rule $\gamma \gets \gamma-\alpha_\gamma \frac{\partial\ell}{\partial \gamma}$, where $\alpha_\gamma$ is the learning rate for $\gamma$. This update is performed via gradient-based methods along with other learnable parameters. For gradient back-propagation, we employ the Riemannian manifold optimizer \cite{huang2017riemannian}, with a learning rate of 0.001 and over 1000 epochs, the batch size is set to 16, weight decay is set to 1e-5, momentum is set as 0.9, the transformation matrix of SPDNet is initialized to a semi-positive definite matrix ($N \times 64$). Results from each experimental setup are reported using five-fold cross-validation, with both the mean and standard deviation documented for thorough analysis. The permutation t-test (the number of permutations is set to 10000) is used for statistical tests. All experiments were conducted on an Intel(R) Xeon(R) Gold 6448Y, paired with an NVIDIA RTX 6000 GPU.

\subsection{Performance on Task-specific Recognition}
In this series of experiments, we demonstrate task-specific recognition using several methods including SPDNet, CDL, LEML, SPDML, AIM, RSR, DeepO2P, and our \textit{DeepHoloBrain} (denoted as `OURS') on the HCP-A dataset. The numerical results, presented in Table \ref{task-reco}, clearly indicate that \textit{DeepHoloBrain} significantly outperforms (indicated by `*') the other seven methods, at a significance level $p<0.05$. Notably, all methods exhibit remarkable performance in this task. This can be attributed to the distinctively designed tasks (VISMOTOR, CARIT, FACENAME, and Resting State) within the dataset. However, the transformation matrix $\mathbf{\Psi}$ in Eq. \ref{eq:SPDNet}, a crucial component in these methods, is learned through a `black-box' deep learning approach, leaving the exact nature of the parameters learned somewhat obscure. To gain a deeper understanding of brain dynamics, we have developed two attention mappings: one for node-wise analysis and the other for graph wavelet construction (frequency). After retraining the model to focus solely on the VISMOTOR and FACENAME tasks, the visual results of the node-wise attention map, as shown in Fig. \ref{tak_recon2} (left), reveal a distinct spatial pattern in the sensorimotor and visual regions (color bar denotes the degree). This indicates that anatomical regions within the areas detected by our model are highly relevant to the VISMOTOR and FACENAME tasks, respectively. Such findings align with current research \cite{bedel2023bolt,glasser2016multi}, suggesting that our model is capable of capturing intrinsic features that differentiate various brain tasks. Furthermore, Fig. \ref{tak_recon2} (right) displays the oscillation patterns underlying the learned most relevant frequencies, indicating a distinct preference for high frequencies in the sensorimotor regions and lower frequencies within the visual region. This informative location-frequency attention map suggests that different cognitive tasks engage distinct neural circuits, manifesting as varied frequency responses across brain regions.

\begin{table}[h]
\vspace{-1.0em}
\caption{Results on brain task recognition for HCP-Aging dataset.}
\label{task-reco}
\vskip 0.15in
\begin{center}
\begin{small}
\begin{sc}
\setlength{\tabcolsep}{3pt}
\begin{tabular}{lcccr}
\toprule
Methods & Accuracy & Recall & F1-score \\ %Specificity
\midrule
SPDNet    & 0.984$\pm$ 0.003& 0.975$\pm$ 0.004& 0.978$\pm$ 0.004 \\
CDL & 0.976$\pm$ 0.003& 0.962$\pm$ 0.005& 0.966$\pm$ 0.005\\
LEML    & 0.961$\pm$ 0.022& 0.903$\pm$ 0.039& 0.929$\pm$ 0.036 \\
SPDML     & 0.944$\pm$ 0.015& 0.908$\pm$ 0.027&0.920$\pm$0.019 \\
AIM     & 0.952$\pm$ 0.014& 0.9114$\pm$ 0.016&0.929$\pm$0.015\\
RSR     & 0.966$\pm$ 0.005& 0.944$\pm$ 0.010& 0.951 $\pm$0.008 \\
DeepO2P      & 0.977$\pm$ 0.004& 0.963$\pm$ 0.006& 0.969$\pm$ 0.005 \\
OURS   &\bf{0.995$\pm$ 0.003*}& \bf{0.989$\pm$ 0.003*}& \bf{0.993$\pm$ 0.003*} \\
\bottomrule
\end{tabular}
\end{sc}
\end{small}
\end{center}
\vspace{-1.0em}
\end{table}

\begin{figure}[h]
    \centering
    \includegraphics[width=0.48\textwidth]{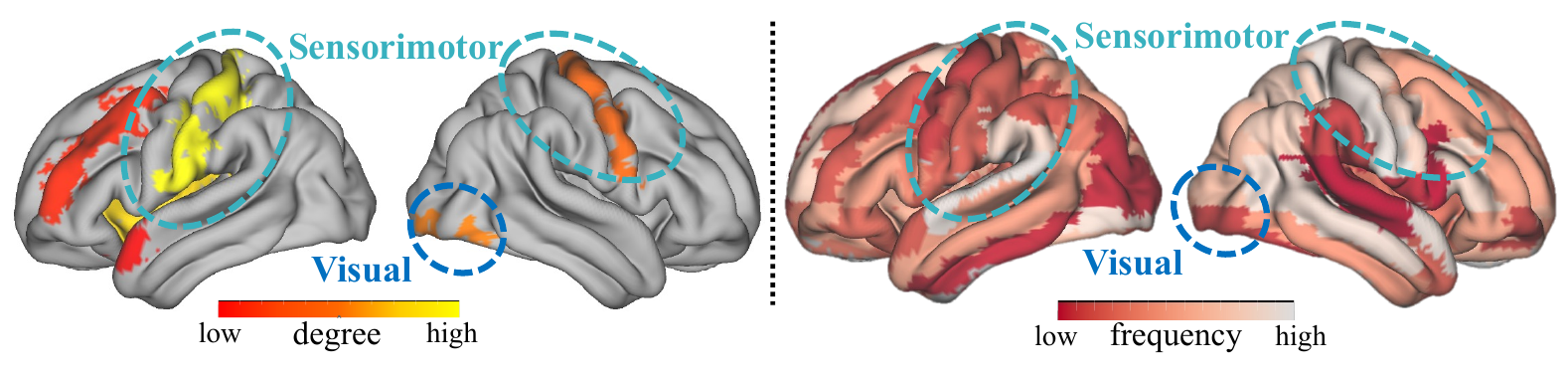}
    \vspace{-2.0em}
    \caption{Left: Uncovered anatomical regions (using node-wise attention) for task VISMOTOR and task FACENAME, respectively. Right: Oscillation patterns underlying the more relevant harmonic frequencies for VISMOTOR and FACENAME tasks, revealing task-specific wavelet dynamics.}
    \vspace{-1.0em}
    \label{tak_recon2}
\end{figure}

\subsection{Performance on Disease Diagnosis}
\textbf{Disease Early Diagnosis.} We evaluate the prediction accuracy on ADNI and OASIS data, respectively, in forecasting the risk of developing AD by SPDNet, CDL, LEML, SPDML, AIM, RSR, DeepO2P and \textit{DeepHoloBrain} (marked `OURS') in Table \ref{AD-dig} (top and middle). At the significance level of 0.001, our method outperforms all other counterpart methods in terms of classification accuracy (indicated by `*').

\textbf{Altered SC-FC Coupling in AD.} Here, we seek to identify the alterations of SC-FC coupling that underline the progression of AD. In this regard, we apply the node-size attention module to characterize the regional contribution of SC-FC coupling in recognizing AD from CN, which implies the most vulnerable brain regions in AD progression. As illustrated in Fig. \ref{ad-dis}, the analysis on both OASIS and ADNI datasets reveals that the regions predominantly affected are within the default mode regions (highlighted by red circles). This finding aligns with numerous studies that have established a strong correlation between these regions and human memory \cite{greicius2004default, koch2012diagnostic}.

\textbf{Performance on Learning Consistency across Data Cohorts.} Herein, we focus on consistency in terms of learned attention maps from ADNI and OASIS. Although both are AD-specific studies, variations exist in the choice of scanner, study design, and demographic data of the subjects between the two studies. Particularly, we skip the data harmonization routine \cite{richter2022validation} and train our deep model on ADNI and OASIS separately. The spatial maps of region-specific and frequency-specific attentions are shown in Fig. \ref{ad-dis} side by side. It is evident that the oscillation patterns in frequency-specific attention are quite consistent between ADNI and OASIS. Although we only find one region (circles in Fig. \ref{ad-dis} bottom) consistently selected in two data cohorts, the top-ranked regions in OASIS and ADNI (circles in Fig. \ref{ad-dis} top) are both located in the default-mode network \cite{greicius2004default,lee2020relationship,mevel2011default}, a sub-network frequently affected by AD.
\begin{figure}[h]
    \centering
    \vspace{-0.5em}
        \includegraphics[width=0.48\textwidth]{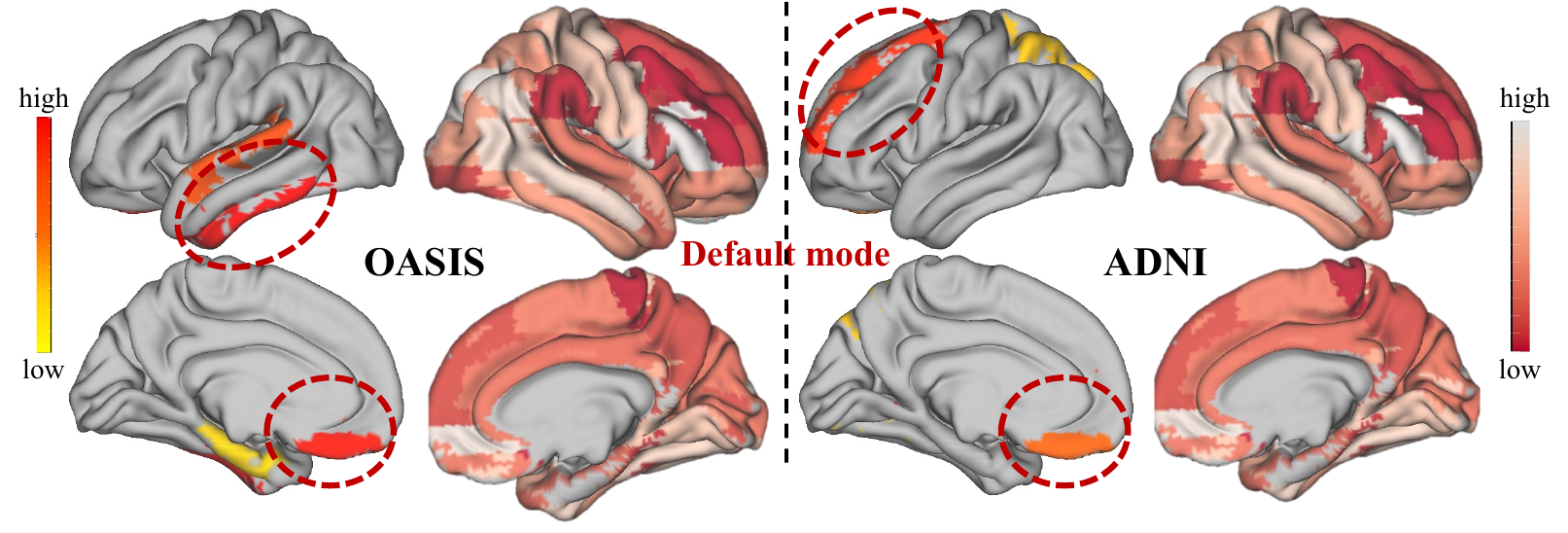}
   \vspace{-2.5em}
    \caption{Consistency evaluation for region-specific ($1^{st}$ and $3^{rd}$ columns) and frequency-specific ($2^{nd}$ and $4^{th}$ columns) attentions learned from OASIS (left) and ADNI (right).}
    \vspace{-1.0em}
    \label{ad-dis}
\end{figure}

\begin{table}[h]
\vspace{-1.0em}
\caption{Prognosis accuracies on forecasting AD risk for OASIS (top) and ADNI (shaded, middle) datasets. Bold font denotes the best performance. `OURS' refers to our proposed model, named \textit{DeepHoloBrain}. When we mention `baseline' (darker shade), they specifically denote the version of the models that have been trained using a particular dataset (i.e., OASIS and ADNI). Additionally, `baseline+' (lighter shade, bottom) represents an enhanced variant of the models, which has been pre-trained on the HCP-A dataset and subsequently fine-tuned on the ADNI dataset.}
\label{AD-dig}
\vskip 0.15in
\begin{center}
\begin{small}
\begin{sc}
\setlength{\tabcolsep}{2pt}
\begin{tabular}{lccccc}
\toprule
Methods & Accuracy & Recall & F1-score \\ %Specificity
\midrule
SPDNet    & 0.871$\pm$ 0.018& 0.593$\pm$ 0.017& 0.613$\pm$ 0.024  \\
CDL & 0.827$\pm$ 0.623& 0.570$\pm$ 0.062& 0.581 $\pm$0.047 \\
LEML    & 0.796$\pm$ 0.193& 0.624$\pm$ 0.045&0.632$\pm$ 0.117 \\
SPDML     & 0.786$\pm$ 0.148& 0.739$\pm$ 0.038& 0.697$\pm$0.091        \\
AIM     & 0.816$\pm$ 0.098& 0.540$\pm$ 0.043& 0.539$\pm$0.072\\
RSR     & 0.840$\pm$ 0.020& 0.690$\pm$ 0.020& 0.689 $\pm$0.028\\
DeepO2P      & 0.857 $\pm$ 0.019& 0.684$\pm$ 0.02& 0.675$\pm$ 0.021  \\
OURS  & \bf{0.885$\pm$0.017*}& \bf 0.740$\pm$ 0.045*& \bf 0.6974$\pm$ 0.041* \\
\hline
\hline
\rowcolor{gray!40} SPDNet    & 0.800$\pm$0.085& 0.670$\pm$ 0.067& 0.627$\pm$ 0.090 \\
\rowcolor{gray!40} CDL & 0.710$\pm$0.095& 0.500$\pm$ 0.018& 0.415$\pm$0.064 \\
\rowcolor{gray!40} LEML    & 0.704$\pm$0.095& 0.523$\pm$0.019& 0.474$\pm$0.065 \\
\rowcolor{gray!40} SPDML     & 0.672$\pm$ 0.079& 0.543$\pm$ 0.037& 0.529$\pm$ 0.055        \\
\rowcolor{gray!40} AIM     & 0.708$\pm$ 0.089& 0.500$\pm$ 0.000& 0.413$\pm$ 0.032 \\
\rowcolor{gray!40} RSR     & 0.740$\pm$ 0.106& 0.610$\pm$ 0.059& 0.608$\pm$ 0.080 \\
\rowcolor{gray!40} DeepO2P     & 0.760$\pm$ 0.089& 0.614$\pm$ 0.068& 0.625$\pm$0.082         \\
\rowcolor{gray!40} OURS  & \bf{0.820$\pm$ 0.071*}& \bf{0.625$\pm$ 0.049*}& \bf{0.647$\pm$0.079*} \\
\midrule
\rowcolor{gray!10}
\multicolumn{4}{@{}l@{}}{\makebox[0.72\linewidth][l]{~~~~~ \hspace{8pt}  ~ ~  ~~~Accuracy \hspace{2pt} Recall \hspace{2pt} F1-score \hspace{2pt} Precision}} \\
\rowcolor{gray!10}
\multicolumn{4}{@{}l@{}}{\makebox[0.72\linewidth][l]{SPDNet+ \hspace{1pt} ~~ 0.7120 \hspace{10pt} {\bf0.7120} \hspace{12pt} 0.6632 \hspace{13pt} 0.7409}} \\
\rowcolor{gray!10}
\multicolumn{4}{@{}l@{}}{\makebox[0.72\linewidth][l]{CDL+ \hspace{1pt} ~ ~ ~ ~~~0.6640 \hspace{10pt} 0.5253 \hspace{12pt} 0.5168 \hspace{12pt} 0.5397}} \\
\rowcolor{gray!10}
\multicolumn{4}{@{}l@{}}{\makebox[0.72\linewidth][l]{LEML+ \hspace{1pt} ~ ~~~ 0.6080 \hspace{10pt} 0.5501 \hspace{12pt}  0.5466 \hspace{12pt} 0.5462}} \\
\rowcolor{gray!10}
\multicolumn{4}{@{}l@{}}{\makebox[0.72\linewidth][l]{SPDML+ \hspace{1pt} ~~ 0.5880 \hspace{11pt}~0.5803 \hspace{12pt} 0.5581 \hspace{12pt} 0.5671}} \\
\rowcolor{gray!10}
\multicolumn{4}{@{}l@{}}{\makebox[0.72\linewidth][l]{AIM+ \hspace{1pt} ~ ~ ~ ~~ 0.6160 \hspace{10pt} 0.5356 \hspace{12pt} 0.5356 \hspace{12pt} 0.5356}} \\
\rowcolor{gray!10}
\multicolumn{4}{@{}l@{}}{\makebox[0.72\linewidth][l]{RSR+ \hspace{1pt} ~ ~ ~ ~~ 0.6880 \hspace{10pt} 0.6880 \hspace{12pt} ~{\bf 0.8690} \hspace{13pt} 0.4716}} \\
\rowcolor{gray!10}
\multicolumn{4}{@{}l@{}}{\makebox[0.72\linewidth][l]{DeepO2P+ \hspace{1pt} 0.6960 \hspace{10pt} 0.6880 \hspace{12pt}~ 0.5971 \hspace{12pt} 0.7369}} \\
\rowcolor{gray!10}
\multicolumn{4}{@{}l@{}}{\makebox[0.72\linewidth][l]{OURS+ \hspace{1pt} ~ ~~~ {\bf{{0.7400*}}} \hspace{8pt} 0.6103\hspace{18pt}0.6081 \hspace{12pt} \bf 0.7892*}} \\
\bottomrule
\end{tabular}
\end{sc}
\end{small}
\end{center}
\vspace{-1.5em}
\end{table}

\textbf{Generality as A Pre-trained Model.} One of the critical challenges of deploying computer-assisted diagnosis in clinical routine is the limited sample size, especially for disease cohorts. Considering the fact that (1) the pre-defined scattering transforms underscore the coupling mechanism of SC-FC relationship and (2) the \textit{MLP-Mixer} architecture enables us to train the deep model with much fewer parameters than conventional methods, we hypothesize that our model pre-trained on a large dataset of normal healthy brains can produce reasonable classification results for disease diagnosis after fine-tuning on a limited amount disease-specific data. To test this hypothesis, we pre-train a regression model based on Montreal Cognitive Assessment (MoCA) score on HCP-A data and fine-tune a classification mode on ADNI data.
The experimental results, as presented in Table \ref{AD-dig} (bottom), it is apparent that our method has achieved significant improvement in accuracy and precision, with a marginal decrease in recall and F1-score. Furthermore, we compute the AUC for SPDNet and our \textit{DeepHoloBrain}, where our method achieves 1.76\% improvement over SPDNet. Considering early diagnosis of AD is critical in developing interventional therapeutics, the results in Table 2 (bottom) suggest our method has better clinical value than SPDNet in terms of diagnosis sensitivity.

\section{Discussion}
In this section, we delve into the performance comparison of task recognition and disease diagnosis using graph-based and transformer-based methods, which have been extensively employed in neuroscience research. Specifically, we evaluate six methods, comprising four GCN-based approaches and two transformer models. The comparison encompasses GCN \cite{kipf2017semisupervised}, GAT \cite{velivckovic2017graph}, GIN \cite{xu2018powerful}, BrainNetCNN \cite{kawahara2017brainnetcnn}, Graph Transformer (GraphT) \cite{shi2020masked}, and BolT \cite{bedel2023bolt}, assessed on both the HCP-A dataset (top) and an ADNI dataset (bottom) in Table \ref{graph-transformer}. Notably, for GCN-based methods, we utilize FC as the feature representation underlying the SC topology. For BolT and BrainNetCNN, only FC information is used in the deep model.

\begin{table}[h]
\vspace{-1.5em}
\caption{Performance on task recognition for HCP-A dataset (top) and forecasting AD risk for ADNI datasets (bottom), respectively. }
\label{graph-transformer}
\vskip 0.15in
\begin{center}
\begin{small}
% \begin{sc}
\setlength{\tabcolsep}{2pt}
\begin{tabular}{lccccc}
\toprule
Methods & Accuracy & Precision & F1-score \\ %Specificity
\midrule
GCN    & 0.690$\pm$ 0.057& 0.511$\pm$ 0.157& 0.582$\pm$ 0.116 \\
GAT & 0.719$\pm$ 0.123&  0.540$\pm$ 0.222 & 0.611 $\pm$0.182 \\
GIN    & 0.585$\pm$ 0.124& 0.513$\pm$ 0.162&0.694$\pm$ 0.066 \\
GraphT     & 0.721$\pm$ 0.127& 0.542$\pm$ 0.227& 0.614$\pm$0.189 \\
BrainNetCNN    & 0.725$\pm$ 0.135& 0.546$\pm$ 0.235 & 0.617$\pm$0.195\\
{BolT}    & 0.863$\pm$ 0.013& 0.865$\pm$ 0.015& 0.864 $\pm$0.014\\
\hline
\hline
\rowcolor{gray!40} GCN    & 0.541$\pm$0.313& 0.447$\pm$ 0.369& 0.472$\pm$ 0.366  \\
\rowcolor{gray!40} GAT & 0.659$\pm$0.287& 0.562$\pm$ 0.319& 0.596$\pm$0.319 \\
\rowcolor{gray!40} GIN    & 0.600$\pm$0.106& 0.728$\pm$0.073& 0.642$\pm$0.093 \\
\rowcolor{gray!40} GraphT     & 0.541$\pm$ 0.275& 0.595$\pm$ 0.297& 0.497$\pm$ 0.276        \\
\rowcolor{gray!40} BrainNetCNN     & 0.800$\pm$ 0.085& 0.638$\pm$ 0.124& 0.643$\pm$ 0.102 \\
\rowcolor{gray!40} BolT     & 0.812$\pm$ 0.075& 0.618$\pm$ 0.142& 0.645$\pm$ 0.105 \\
\bottomrule
\end{tabular}
% \end{sc}
\end{small}
\end{center}
\vspace{-2.0em}
\end{table}

It is evident that GCN-based methods yield less satisfactory results in recognizing cognitive tasks for normal adults (in HCP-A), largely due to the neuroscience fact that the topological pattern usually does not highly correlate with functional fluctuations exhibited in cognitive tasks. Meanwhile, some GCN-based methods are able to yield reasonable prediction accuracy of AD risk since SC alteration is a hallmark of AD progression in clinical findings. Taken together, these results suggest that effectively employing GCN-based techniques from machine learning in neuroscience requires tailored adaptation of domain-specific expertise. Following this spirit, our \textit{DeepHoloBrain} integrates physics principles, neuroscience insights, and the power of machine learning, which allows us to consistently outperform existing deep models. A good integration of SC and FC information effectively provides a greater advantage than using SC or FC alone. For instance, although BolT uses a more advanced machine learning backbone (Transformer) than our \textit{DeepHoloBrain} (MLPs), BolT (using FC only) shows less accuracy in both cognitive task recognition (0.863 vs. 0.995) and disease diagnosis (0.812 vs. 0.820) than our method (using both SC and FC).

\section{Conclusions}
In this work, we have ventured into developing a unique deep learning framework, \textit{DeepHoloBrain}, which utilizes a scattering-transform mixer on the Riemannian manifold to delve into the complexities of neural dynamics. At its core, \textit{DeepHoloBrain} is a physics-informed model that aims to unravel the intricate coupling mechanism between brain structure and function, viewed through the prism of data geometry. Our model innovatively combines: (1) A new mapping mechanism on the Riemannian manifold using scattering transforms; (2) A pioneering geometric deep learning structure, tailored for SPD matrices; (3) An insightful approach to understanding the human brain with deep learning; (4) A thorough evaluation against large-scale neuroimaging data, proving its effectiveness and practicality. Our extensive experimental results affirm the model’s effectiveness and practicality, marking a promising pilot work in the field of neuroscience.

% \section*{Accessibility}
% Authors are kindly asked to make their submissions as accessible as possible for everyone including people with disabilities and sensory or neurological differences.
% Tips of how to achieve this and what to pay attention to will be provided on the conference website \url{http://icml.cc/}.

\section*{Software and Data}

HCP-A data can be found in \url{https://www.humanconnectome.org/study/hcp-lifespan-aging}, ADNI can be downloaded in \url{https://adni.loni.usc.edu/}, OASIS can be downloaded in \url{https://sites.wustl.edu/oasisbrains/home/oasis-3/}. The code is released in GitHub: \url{https://github.com/Dandy5721/ICML2024.git}.

% Acknowledgements should only appear in the accepted version.
\section*{Acknowledgements}

This work was supported by the National Institutes of Health AG070701, AG073927, AG068399, and Foundation of Hope.

\section*{Impact Statement}

This paper advances Machine Learning and Neuroscience by introducing novel concepts and methods. Our major contributions to the machine learning field are we propose (1) a new mapping mechanism on the Riemannian manifold, constrained by scattering transforms, and (2) a novel geometric deep model - Scattering-transform Mixer for the Riemannian manifold of SPD matrices. From an application standpoint, it offers (1) a novel approach to understanding the human brain via deep learning techniques, and (2) a comprehensive evaluation of these techniques' clinical impact on large-scale neuroimaging datasets. These advancements promise significant implications in both theoretical and practical domains of neural science and machine learning. Potential applications can be referred to the last paragraph of Section \ref{DeepHoloBrain}.

% In the unusual situation where you want a paper to appear in the
% references without citing it in the main text, use \nocite
\nocite{langley00}

\bibliography{example_paper}
\bibliographystyle{icml2024}

%%%%%%%%%%%%%%%%%%%%%%%%%%%%%%%%%%%%%%%%%%%%%%%%%%%%%%%%%%%%%%%%%%%%%%%%%%%%%%%
%%%%%%%%%%%%%%%%%%%%%%%%%%%%%%%%%%%%%%%%%%%%%%%%%%%%%%%%%%%%%%%%%%%%%%%%%%%%%%%
% APPENDIX
%%%%%%%%%%%%%%%%%%%%%%%%%%%%%%%%%%%%%%%%%%%%%%%%%%%%%%%%%%%%%%%%%%%%%%%%%%%%%%%
%%%%%%%%%%%%%%%%%%%%%%%%%%%%%%%%%%%%%%%%%%%%%%%%%%%%%%%%%%%%%%%%%%%%%%%%%%%%%%%
\newpage
\appendix
\onecolumn
\section{Appendix}

\subsection{Construction of SC and FC Networks}
\label{process}
The data preprocessing involves the following steps to derive FC and SC matrices:

$\bullet$ \textbf{fMRI Data Processing\footnote{\url{https://fmriprep.org/en/stable/}}:}
    % \begin{enumerate}[label=\arabic*.]

$\triangleright$ \textit{Structural MRI (T1-weighted) preprocessing}, including brain extraction, tissue segmentation, spatial normalization, cost function masking, longitudinal processing, and brain mask refinement.

$\triangleright$ \textit{BOLD preprocessing}, encompassing reference image estimation, head-motion estimation, slice time correction, susceptibility distortion correction, EPI to T1w registration, resampling into standard spaces, sampling to Freesurfer surfaces, HCP Grayordinates, and confounds estimation.

$\triangleright$ \textit{Generation of BOLD time-series data for FC matrices construction.}

$\bullet$ \textbf{DWI Data Processing\footnote{\url{https://qsiprep.readthedocs.io/en/latest/}}:}
    % \begin{enumerate}[label=\arabic*.]

$\triangleright$ \textit{Initial processing}, including image and gradient orientation conformity, distortion grouping, denoising, distortion correction, head motion correction, and B0 template construction.

$\triangleright$ \textit{Reconstruction steps}, such as ODF/FOD estimation, anisotropy scalar computation, and tractography.

$\triangleright$ \textit{Estimation of structural connectomes to generate SC matrices.}

Figure \ref{sc-fc} illustrates the process of constructing FC and SC networks.

\begin{figure}[h]
    \centering
    \vspace{-1.0em}
        \includegraphics[width=0.98\textwidth]{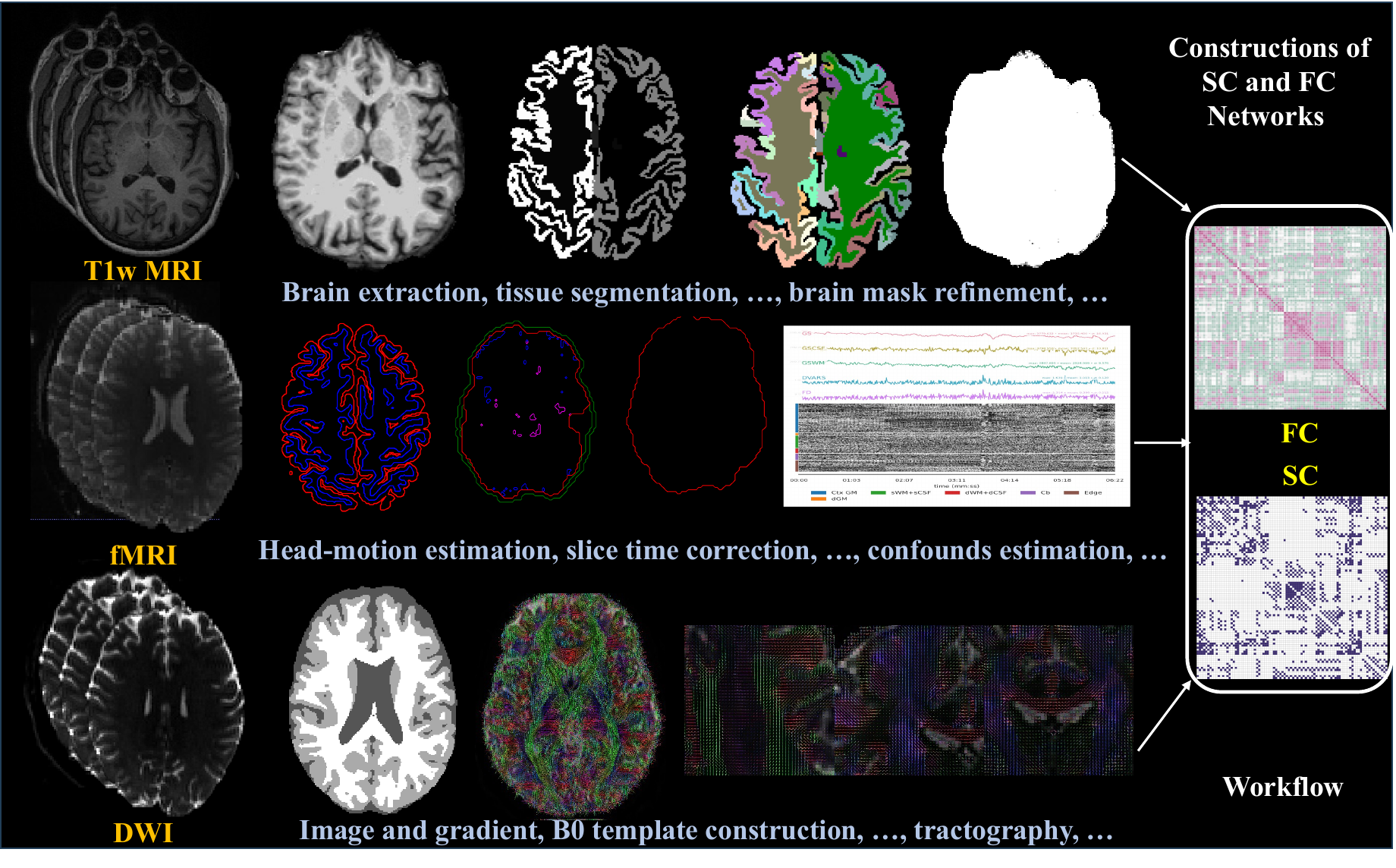}
   % \vspace{-1.5em}
    \caption{Workflow of constructing FC and SC.}
    % \vspace{-1.0em}
    \label{sc-fc}
\end{figure}

\subsection{Proof}
\begin{proposition}
    Given a full-rank matrix $ W \in \mathbb{R}^{n \times m} $ and a vector $ v \in \mathbb{R}^m $, since the columns of $ W $ are linearly independent and span $ \mathbb{R}^m $, there exists, by the fundamental theorem of linear algebra, a vector $ u \in \mathbb{R}^n $ such that $ W u = v $.

    \label{theorem:lowrank2}
\end{proposition}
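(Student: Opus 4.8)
The plan is to derive the claim directly from the hypothesis that the columns of $W$ span the relevant space, since that is precisely what the fundamental theorem of linear algebra supplies for a matrix of full rank. First I would fix notation by writing $W = [w_1, w_2, \ldots]$ as the list of its column vectors, so that for any coefficient vector $u$ the product $Wu = \sum_j u_j w_j$ is, by definition, a linear combination of the columns of $W$. Under this reading, the assertion that some $u$ exists with $Wu = v$ is nothing more than the statement that $v$ lies in the column space (range) of $W$, and so the entire content of the proposition is the identification of that range with the target space.

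Second, I would translate the full-rank hypothesis into the spanning property via rank--nullity. Because $W$ has full rank, $\operatorname{rank}(W)$ equals the dimension of the target space; since the rank is by definition the dimension of the column space, the column space is the whole target space. Hence every admissible $v$ lies in the range of $W$, and the existence of a solution $u$ to $Wu = v$ follows immediately. This is exactly the surjectivity half of the fundamental theorem of linear algebra, so no construction beyond invoking that theorem is required.

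Third --- and this is the feature that the downstream argument in Proposition \ref{prop:SPD} actually consumes --- I would record that $u$ may be taken nonzero whenever $v$ is nonzero: if $v \neq 0$, then no solution can have $u = 0$, since $W \cdot 0 = 0 \neq v$. This single observation is what later licenses the substitution $v = Wu$ with $u \neq 0$ inside the SPD quadratic-form computation.

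The main obstacle here is not mathematical depth but dimensional bookkeeping. One must be careful about whether ``full rank'' is to be read as full column rank or full row rank, and then match the shapes of $W$, $u$, and $v$ so that the equation $Wu = v$ is well posed and the phrase ``the columns span $\mathbb{R}^m$'' refers to the space in which $v$ genuinely lives; the spanning set and the target space must be the same space for the definition-of-span argument to close. Once the orientation is fixed consistently, the conclusion is immediate from the definition of span, with rank--nullity guaranteeing that the span is the entire space and the nonzero remark above handling the use made of the lemma elsewhere.
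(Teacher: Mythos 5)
Your proof takes essentially the same route as the paper's: both identify ``full rank'' with ``the column space equals the target space'' and then read off the existence of the coefficient vector $u$ from the definition of span, so the two arguments coincide in substance. Your two additions---the explicit remark that $u \neq 0$ whenever $v \neq 0$ (which the paper instead places inside the proof of Proposition~\ref{prop:SPD}) and the warning about dimensional bookkeeping---are both sound, and the latter is genuinely warranted, since the statement as printed is only well posed under a consistent reading such as $W^\intercal u = v$ with $W^\intercal$ of full row rank (the columns of an $n \times m$ matrix live in $\mathbb{R}^n$, not $\mathbb{R}^m$, and a coefficient vector over $m$ columns has $m$ entries), a slip that the paper's own proof inherits rather than repairs.
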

\begin{proof}
Since $ W $ is full rank, its column space is a subspace of $ \mathbb{R}^m $. If the rank of $ W $ is $ m $, it means that its column space is actually the entire $ \mathbb{R}^m $. Thus, every vector in $ \mathbb{R}^m $ can be linearly represented by the columns of $ W $, that is, for each $ v \in \mathbb{R}^m $, there exist coefficients $ c_1, c_2, \ldots, c_m $ such that $v = c_1 w_1 + c_2 w_2 + \ldots + c_m w_m$, where $ w_1, w_2, \ldots, w_m $ are the columns of $ W $. We can write this equation to matrix form as

\begin{equation*}
v = W \begin{bmatrix}
c_1 \\
c_2 \\
\vdots \\
c_m
\end{bmatrix},
\end{equation*}

where the vector of coefficients is $ u \in \mathbb{R}^n $ which satisfies $ Wu = v $, proving the existence of such a vector $ u $ for every $ v \in \mathbb{R}^m $.
\end{proof}

\begin{proposition}
\label{laplacian}
\textit{Remark.} Suppose we have a graph Laplacian matrix $ \mathcal{L} = D - A $, where $ D $ is the degree matrix and $ A $ is the adjacency matrix, the first eigenvalue is typically $ \lambda_0 = 0 $, corresponding to the trivial eigenvector $ u_0 $. This is because the graph Laplacian represents the nodal connectivity, and the constant vector represents an equal state of all nodes, naturally belonging to the kernel of $ \mathcal{L} $.

The eigenvector $ u_0 $ is usually a vector with all equal elements. For an undirected graph, it is often chosen as $ u_0 = \frac{1}{\sqrt{N}}(1, 1, \ldots, 1)^\intercal $, where $ N $ is the number of nodes in the graph. Thus, the first vector $ u_0 $ has all positive elements and this choice satisfies the definition of an eigenvector:

\[ \mathcal{L} u_0 = (D - A) u_0 = 0 \cdot u_0 = 0 \]

    and the normalization condition of $ u_0 $ (i.e., $ \|u_0\| = 1 $), since each element is $\frac{1}{\sqrt{N}} $, the norm (length) of the vector is 1.

In this context, the value range harmonic wavelet is $[-\infty,\frac{1}{N},+\infty]$, thus the lower band of $max(\cdot)$ operator is $\frac{1}{{N}}>0$.

\end{proposition}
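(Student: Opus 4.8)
The plan is to reduce the statement to two structural facts about the graph Laplacian $\mathcal{L} = D - A$ and then to convert the sign-definiteness of its constant eigenmode into the claimed lower bound on the max operator. First I would verify that $\lambda_0 = 0$ is an eigenvalue with constant eigenvector. By definition of the degree matrix, the $i$-th row of $A$ sums to $d_i$, so $A\mathbf{1} = D\mathbf{1}$ and hence $\mathcal{L}\mathbf{1} = (D-A)\mathbf{1} = 0$; thus the all-ones vector lies in the kernel of $\mathcal{L}$, giving $\lambda_0 = 0$. Normalizing to unit length (invoking connectedness of the graph so the kernel is one-dimensional) yields $u_0 = \frac{1}{\sqrt{N}}\mathbf{1}$, whose entries are all equal to $\frac{1}{\sqrt{N}} > 0$, confirming the normalization claim $\|u_0\| = 1$.

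Next I would exploit the sign-definiteness of this zeroth mode. For every pair of nodes $(i,j)$ the entry-product contributed by $u_0$ is $u_0(i)u_0(j) = \frac{1}{\sqrt{N}}\cdot\frac{1}{\sqrt{N}} = \frac{1}{N}$, which is strictly positive and independent of $i,j$. In contrast, the higher-frequency eigenvectors $u_s$ for $s \geq 1$ oscillate in sign, so an individual product $u_s(i)u_s(j)$ may be negative — this is exactly why the range is described as spanning $[-\infty, +\infty]$ with the distinguished positive value $\frac{1}{N}$ coming only from the constant mode. Because the maximum over frequencies $s$ always has the choice $s=0$ available, one obtains $\max_s\, u_s(i)u_s(j) \geq u_0(i)u_0(j) = \frac{1}{N}$, which is the desired positive lower bound on the $\max(\cdot)$ operator.

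The main obstacle I anticipate is not the algebra but pinning down precisely which quantity is being maximized and checking that the remaining factors cannot destroy the bound. The spectral filter values $g(\lambda_s) = e^{-\gamma\lambda_s}$ contribute $g(\lambda_0) = e^{0} = 1$ at the constant mode, so they neither rescale away nor flip the sign of the anchoring term, and the SPD factor $\boldsymbol{X}$ must be handled so that the guaranteed positive contribution of the $s=t=0$ block survives the contraction $u_s(i)u_s(j)\boldsymbol{X}u_t(i)u_t(j)$. I would therefore make explicit that the lower bound is realized by the simultaneous choice $s=t=0$, and that strict positivity of $u_0(i)u_0(j)$, combined with $g(\lambda_0)=1$ and positive-definiteness of $\boldsymbol{X}$, keeps the block-wise maximum at least $\frac{1}{N}>0$ — precisely the ingredient that Proposition \ref{SPD2} needs in order to write $\Tilde{X} = e^{B}$ and conclude positive-definiteness.
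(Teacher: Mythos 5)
Your proposal is correct and takes essentially the same route as the paper: both arguments anchor the maximum at the constant zeroth eigenmode $u_0 = \frac{1}{\sqrt{N}}(1,\ldots,1)^\intercal$, whose pairwise products $u_0(i)u_0(j) = \frac{1}{N}$ guarantee that a strictly positive value is always among the candidates of the $\max(\cdot)$ operator, yielding the lower bound $\frac{1}{N} > 0$. If anything, you are more careful than the paper's own proof, which silently drops the spectral-filter factors (your observation that $g(\lambda_0)=e^{0}=1$) and the contraction with $\boldsymbol{X}$ that you explicitly flag as needing to survive the bound.
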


\begin{proposition}
\label{spd_prove1}
Given a symmetric matrix $A$,  $B=\exp(A)$ is a SPD matrix.
\end{proposition}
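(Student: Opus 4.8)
The plan is to use the spectral theorem together with the power-series definition of the matrix exponential. First I would invoke the spectral theorem: since $A$ is real and symmetric, there exists an orthogonal matrix $Q$ (i.e., $Q^\intercal Q = I$) and a diagonal matrix $\Lambda = \text{diag}(\mu_1,\ldots,\mu_N)$ of real eigenvalues such that $A = Q\Lambda Q^\intercal$.

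Next I would use the fact that $A^k = Q\Lambda^k Q^\intercal$ for every $k \geq 0$ (the inner factors collapse because $Q^\intercal Q = I$), so that substituting into the series $\exp(A) = \sum_{k\geq 0} A^k/k!$ and factoring $Q$ out on the left and $Q^\intercal$ on the right yields $B = \exp(A) = Q\,\exp(\Lambda)\,Q^\intercal$, where $\exp(\Lambda) = \text{diag}(e^{\mu_1},\ldots,e^{\mu_N})$.

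Two properties then follow immediately. For symmetry, $B^\intercal = (Q\exp(\Lambda)Q^\intercal)^\intercal = Q\,\exp(\Lambda)^\intercal\, Q^\intercal = B$, since the diagonal matrix $\exp(\Lambda)$ equals its own transpose. For positive definiteness, the expression $B = Q\exp(\Lambda)Q^\intercal$ is exactly an eigendecomposition of $B$, so its eigenvalues are the diagonal entries $e^{\mu_i}$. Because the scalar exponential is strictly positive for every real argument, each $e^{\mu_i} > 0$; and a symmetric matrix all of whose eigenvalues are strictly positive is positive definite: for any nonzero $v$, writing $w = Q^\intercal v \neq 0$ gives $v^\intercal B v = w^\intercal \exp(\Lambda) w = \sum_i e^{\mu_i} w_i^2 > 0$. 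Combining symmetry with strict positivity of this quadratic form shows $B \in Sym^+_N$.

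There is no serious obstacle here, as the result is classical; the only points requiring care are (i) justifying the term-by-term manipulation $\exp(A) = Q\exp(\Lambda)Q^\intercal$, which rests on the absolute convergence of the matrix exponential series (valid for every square matrix) together with the identity $A^k = Q\Lambda^k Q^\intercal$, and (ii) being explicit that it is the \emph{strict} positivity of every eigenvalue, rather than mere nonnegativity, that upgrades positive semidefiniteness to positive definiteness. This strict positivity is precisely what the exponential guarantees, and it is what ensures the construction invoked in Proposition \ref{SPD2} lands on the open manifold $Sym^+_N$ of SPD matrices rather than on its boundary.
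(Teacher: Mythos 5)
Your proof is correct and follows essentially the same route as the paper's: diagonalize $A = Q\Lambda Q^\intercal$ by the spectral theorem, push the orthogonal conjugation through the exponential to get $B = Q\exp(\Lambda)Q^\intercal$, and conclude positivity of the quadratic form via the change of variable $w = Q^\intercal v$. Your version is in fact slightly more careful than the paper's, since you justify the identity $\exp(Q\Lambda Q^\intercal) = Q\exp(\Lambda)Q^\intercal$ from the power series and you verify symmetry explicitly, which the paper's proof omits.
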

\begin{proof} To prove that $\exp(A)$ is positive definite, we use the fact that any symmetric matrix can be diagonalized. This means that there exists an orthogonal matrix $Q$ and a diagonal matrix $D$ such that $A = QDQ^\intercal$.

Then, we have:
\[ \exp(A) = \exp(QDQ^\intercal) = Q\exp(D)Q^\intercal \]
where the exponential of a diagonal matrix is obtained by applying the exponential function to its diagonal elements.

Consider any non-zero vector $v$, we have:
\[ v^\intercal \exp(A) v = v^\intercal Q\exp(D)Q^\intercal v \]
Let $y = Q^\intercal v$, then we get:
\[ v^\intercal \exp(A) v = y^\intercal \exp(D) y \]

Since $\exp(D)$ is positive definite (its diagonal elements are the results of the exponential function, hence all positive), we have:
\[ y^\intercal \exp(D) y > 0 \]
unless $y$ is a zero vector. However, since $Q$ is orthogonal, $y = 0$ if and only if $v = 0$. Therefore, for all non-zero $v$, the expression is positive.

This proves that $B=\exp(A)$ is a positive definite matrix.

\end{proof}

\begin{proposition}
\label{spd_prove2}
Given a positive definite matrix $A$, then $B=(A + A^\intercal)/2$ is a SPD matrix.
\end{proposition}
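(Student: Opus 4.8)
The plan is to establish the two defining properties of an SPD matrix separately: symmetry, and strict positivity of the associated quadratic form (in the sense of Remark \ref{SPD}). The symmetry is immediate and I would dispatch it first; the positive-definiteness then follows from a single scalar-transpose observation that collapses the quadratic form of $B$ onto that of $A$, so that the antisymmetric part of $A$ drops out.

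First I would verify symmetry by direct computation, $B^\intercal = \frac{1}{2}(A + A^\intercal)^\intercal = \frac{1}{2}(A^\intercal + A) = B$, which holds regardless of whether $A$ itself is symmetric. Next, for an arbitrary non-zero vector $v \in \mathbb{R}^N$, I would expand the quadratic form as $v^\intercal B v = \frac{1}{2}\bigl(v^\intercal A v + v^\intercal A^\intercal v\bigr)$. The key step is to observe that $v^\intercal A^\intercal v$ is a scalar and hence equals its own transpose, $v^\intercal A^\intercal v = (v^\intercal A^\intercal v)^\intercal = v^\intercal A v$. Substituting this back yields the clean identity $v^\intercal B v = v^\intercal A v$.

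Finally, since $A$ is positive definite, meaning $v^\intercal A v > 0$ for every non-zero $v$, the identity above gives $v^\intercal B v = v^\intercal A v > 0$ for all non-zero $v$; together with the symmetry established above, this shows $B$ is SPD. I do not expect a genuine obstacle here, but the one subtlety worth flagging is that $A$ is \emph{not} assumed symmetric, so the argument must rest on the quadratic-form characterization of positive definiteness rather than on an eigenvalue argument (a non-symmetric $A$ may have complex eigenvalues even when its quadratic form is positive). The scalar-transpose trick is precisely what lets the skew-symmetric component $\tfrac{1}{2}(A - A^\intercal)$, whose quadratic form vanishes identically, be discarded without affecting positivity.
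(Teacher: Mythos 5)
Your proof is correct and follows essentially the same route as the paper: verify symmetry by direct transposition, then expand the quadratic form and use the scalar-transpose identity $v^\intercal A^\intercal v = v^\intercal A v$ to reduce positivity of $B$ to that of $A$. Your explicit remark that $A$ need not be symmetric (so the argument must rely on the quadratic form rather than eigenvalues) is a worthwhile clarification, but the underlying argument is the same.
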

\begin{proof}
\textbf{1: Symmetry}:
   The symmetry of $ B $ can be shown quite straightforwardly. Given $ A $, a positive definite matrix, let $ A^\intercal $ denotes its transpose. The matrix $ B $ is defined as $ B = \frac{A + A^\intercal}{2} $. Since the transpose of a sum of matrices is the sum of their transposes, we have $ B^\intercal = \frac{A^\intercal + (A^\intercal)^\intercal}{2} $. Since $ (A^\intercal)^\intercal = A $, it follows that $ B^\intercal = B $. Hence, $ B $ is symmetric.

\textbf{2: Positive Definiteness}: For any non-zero vector $ v $, consider the quadratic form $ v^\intercal Bv $:
   \[ v^\intercal Bv = v^\intercal \left( \frac{A + A^\intercal}{2} \right) v \]
   \[ = \frac{1}{2} v^\intercal Av + \frac{1}{2} x^\intercal A^\intercal v \]
   Since $ A $ is positive definite, $ v^\intercal Av > 0 $ for all non-zero $ v $. Similarly, $ v^\intercal A^\intercal v = (Av)^\intercal v = v^\intercal Av > 0 $ as $ A $ is positive definite. Obviously, $ v^\intercal Bv = (\frac{1}{2} v^\intercal Av + \frac{1}{2} v^\intercal A^\intercal v) > 0$. Hence, $ B $ is positive definite.

\end{proof}

%%%%%%%%%%%%%%%%%%%%%%%%%%%%%%%%%%%%%%%%%%%%%%%%%%%%%%%%%%%%%%%%%%%%%%%%%%%%%%%
%%%%%%%%%%%%%%%%%%%%%%%%%%%%%%%%%%%%%%%%%%%%%%%%%%%%%%%%%%%%%%%%%%%%%%%%%%%%%%%

\end{document}